\documentclass[preprint,12pt,authoryear]{elsarticle}

\usepackage{amssymb}
\usepackage{amsmath}
\usepackage{subcaption}

\allowdisplaybreaks

\usepackage{amsthm} % Loads theorem environments

\newtheorem{theorem}{Theorem}[section]
\newtheorem{definition}[theorem]{Definition}

\newtheorem{corollary}[theorem]{Corollary}

\usepackage[ruled,vlined]{algorithm2e}

\usepackage[colorlinks=true]{hyperref}
\usepackage[margin=1in]{geometry}
\makeatletter
\def\ps@pprintTitle{%
 \let\@oddfoot\@empty
 \def\@oddfoot{\hfil \thepage\hfil}  % Centered page number in footer
}
\makeatother

\begin{document}

\begin{frontmatter}

\title{\textbf{Explainable Outlier Detection for Interval-valued Data}}

\author[inst1]{Catarina P. Loureiro\corref{cor1}}
\ead{catarinapadrela@tecnico.ulisboa.pt}
\author[inst1]{M. Ros\'ario Oliveira}
\ead{rosario.oliveira@tecnico.ulisboa.pt}
\author[inst2]{Paula Brito}
\ead{mpbrito@fep.up.pt}
\author[inst3]{Lina Oliveira}
\ead{lina.oliveira@tecnico.ulisboa.pt}

\cortext[cor1]{Corresponding author.}

\affiliation[inst1]{organization={CEMAT and Department of Mathematics, Instituto Superior Técnico, Universidade de Lisboa},
            country={Portugal}}

\affiliation[inst2]{organization={Faculdade de Economia, Universidade do Porto, and LIAAD-INESC TEC},
            country={Portugal}}

\affiliation[inst3]{organization={CAMGSD and Department of Mathematics, Instituto Superior Técnico, Universidade de Lisboa},
            country={Portugal}}

\begin{abstract}
Explainability is increasingly recognized as a key aspect of outlier detection. However, for complex data structures such as interval-valued data, it remains largely unexplored. Building on an outlier detection framework based on the Interval Minimum Covariance Determinant estimator, we propose a novel approach to explain the outlyingness of interval-valued observations using the concept of the Shapley value. We derive a closed-form expression for the Shapley value of the squared robust Interval-Mahalanobis distance, enabling efficient computation of variable contributions. This formulation allows for a fine-grained interpretation of outliers, providing a detailed decomposition into contributions from centers, ranges, and cross-terms of the interval-valued observations. Moreover, the Shapley value is closely connected to the concept of cellwise outliers, as it can help identify variable-specific outliers that may not be evident at multivariate level. We further extend the framework through the Shapley interaction index to capture pairwise variable interactions driving atypical behavior. The practical utility of the proposed approach is illustrated through two real-world datasets.
\end{abstract}

\begin{keyword}
Symbolic Data Analysis \sep Shapley Value \sep Robust Interval-Mahalanobis Distance \sep Cellwise Outliers \sep Interval Minimum Covariance Determinant Estimator
\end{keyword}

\end{frontmatter}

\section{Introduction}
\label{sec:intro}
The importance of explainability in Artificial Intelligence (AI) and data analysis has become increasingly recognized in recent years. As Machine Learning (ML) and statistical methods become more complex and are deployed in high-stakes applications, the demand for interpretable and trustworthy models has grown substantially, resulting in the rise of eXplainable AI (XAI) (see, e.g, \cite{XAI2020,XAI2024}). This demand extends naturally to outlier detection, where not only identifying atypical observations but understanding why they are considered atypical is crucial for decision-making, scientific insight, or stakeholder confidence \citep{li2023survey}. Outlier detection plays a vital role in data quality assessment, fraud detection, medical diagnosis, and numerous other domains. While the detection of outliers is important, equally important is the ability to explain which features drive the atypical behavior of an observation. This explainability becomes particularly challenging when dealing with non-standard data structures, such as interval-valued data, which are increasingly common in modern applications.

Symbolic Data Analysis (SDA) \citep{Diday} has become a powerful tool for analyzing complex, large-scale data, by capturing information in structured and expressive formats (see, e.g, \cite{billard_diday_2006,SDAoverview}). Interval-valued data, in particular, are useful for capturing inherent variability in numerical measurements. They arise naturally from data aggregation (macrodata) that summarize collections of individual observations (microdata). In this setting, the ontic perspective of intervals \citep{COUSO2014} is adopted, where an interval represents objective variability in an entity, treated as a realization of a set-valued variable. Common examples of interval-valued data include temperature ranges and price fluctuations. Such representations can also help address privacy concerns and enable efficient analysis of large datasets. A variety of methodologies have been developed for the analysis of interval-valued data, including extensions of clustering, regression, and principal component analysis \citep{DESOUZA2004,DESA2025,DIAS2017,Irpino2015regression,LeRademacher2012,RGSerrao2023}. In particular, several approaches have addressed outlier detection \citep{Li2006,Viattchenin2012,DuarteSilva2018,loureiro2026}, highlighting the growing interest in identifying atypical interval-valued observations.

Following the framework of \cite{Oliveira2022,oliveira2025}, interval-valued data are represented as tuples combining the center and range of the intervals (macrodata), and a distribution function characterizing the microdata. A key tool in this framework is the Mallows distance \citep{Mallows1972,Irpino2015}, which can be used to compare intervals. With this distance, a notion of central tendency, the barycenter, and a symbolic covariance matrix that captures scatter and association can be defined \citep{Irpino2015,oliveira2025}. These components are essential for the statistical analysis of interval-valued data. However, classical approaches are vulnerable to the presence of atypical observations, which can severely distort location and scale estimates. Robust methods offer a solution, with various robust methodologies having been proposed for interval-valued data \citep{FAGUNDES2013,DuarteSilva2018,Zhao2023,Tian2024,Greeshmagiri2025}. More specifically, \cite{loureiro2026} extends the Minimum Covariance Determinant (MCD) estimator \citep{Rousseeuw1984,Rousseeuw1985} to interval-valued data through the Interval Minimum Covariance Determinant (IMCD) estimator. The IMCD provides robust estimates of the barycenter and symbolic covariance matrix, enabling the definition of a robust Interval-Mahalanobis distance for effective outlier detection in this setting.

In this way, the robust Interval-Mahalanobis distance provides a robust measure of outlyingness. Nevertheless, robust detection alone does not address explainability, which remains challenging for interval-valued data. The Shapley value, a concept from cooperative game theory \citep{Shapley1953}, offers a promising framework for addressing this explainability challenge. Originally developed to fairly allocate contributions among players in a cooperative game, Shapley values have found new applications in ML for feature importance attribution \citep{SHAP2017}. In the context of outlier detection, \cite{shapley} demonstrated how Shapley values can be used to decompose the outlyingness of an observation into contributions from individual variables. This decomposition provides a clear and interpretable explanation of why a particular observation is considered an outlier, highlighting the specific features that contribute most to its outlying status.

In this work, we extend the Shapley value to the setting of interval-valued outlier detection. We derive a closed-form expression for the Shapley value of the squared robust Interval-Mahalanobis distance \cite{loureiro2026}, enabling efficient computation of variable contributions to outlyingness. We further provide a decomposition of contributions from centers, ranges, and cross-terms of the interval-valued observations, offering multiple levels of granularity in explaining outliers. This framework also connects naturally to cellwise outliers (see, e.g., \cite{Raymaekers_Rousseeuw_2021}), helping to identify outliers specific to certain variables that may not be atypical in the multivariate sense. Furthermore, we develop the Shapley interaction index, which reveals how pairs of variables jointly contribute to the outlyingness. Subsequently, we illustrate the practical utility of our approach through two real-world datasets. 

The structure of the paper is as follows. Section \ref{sec:intervaldata} introduces interval-valued data notation and framework. Section \ref{sec:outlier} reviews the outlier detection methodology based on the robust Interval-Mahalanobis distance. Section \ref{sec:explainability} develops the Shapley value methodology, including the Shapley interaction index. Section \ref{sec:examples} illustrates the methodology on two real-world datasets from automotive and music domains. Lastly, Section \ref{sec:conclusion} summarizes contributions. Proofs of theoretical results are provided in \ref{sec:appendix_explainability}.

\section{Interval-valued Data}
\label{sec:intervaldata}
In SDA, data assume more complex forms than traditional single-valued data, of which interval-valued data are a prominent example. Following the formulation in \cite{Oliveira2022,oliveira2025}, $\mathbb{IR} = \{[a,b]:\; a,b \in \mathbb{R},\; a \leq b \}$ denotes the set of all real closed and bounded intervals, and $\mathbb{IR}^p = \{([a_1,b_1],\dots,[a_p,b_p])^\top:\; a_j,b_j \in \mathbb{R},\; a_j \leq b_j, j=1,\dots,p \}$ represents the set of $p$-dimensional hyperrectangles, with $p$ $\in \mathbb{N}$. In addition, let $c=(a+b)/2\in\mathbb{R}$ and $r=b-a\in\mathbb{R}_0^+$ be the center and range of a real-valued interval $[a,b]\in\mathbb{IR}$, respectively. Then, the interval $[a,b]$ can be equivalently represented as $[c-r/2,c+r/2]$. 

In this framework, an observation can be interpreted as a \textit{symbolic interval}, which combines two complementary levels of information: the \textit{macrodata} which corresponds to the real-valued interval as defined by its bounds, and the \textit{microdata} which represents the underlying values within that range. Following \cite{oliveira2025}, we denote a symbolic interval by $\tilde{x}=(x,F_U)$, where $x=(c,r)^\top$ represents the real-valued interval corresponding to the macrodata and $F_U$ describes the distribution of the microdata normalized to the interval $[-1,1]$. Specifically, $F_U$ is the distribution function of a real-valued latent random variable, $U$, with support $[-1,1]$. If the microdata are not available, the distribution function $F_U$ is usually taken to be a continuous uniform distribution, which is the most common choice in SDA.

Let $\boldsymbol{\mathcal{\tilde{X}}} = (X_1,\ldots,X_p)^\top$ be an interval-valued random vector, and $\boldsymbol{\mathcal{C}} = (C_1,\ldots,C_p)^\top$ and $\boldsymbol{\mathcal{R}} = (R_1,\ldots,R_p)^\top$ be the real-valued random vectors of centers and ranges, respectively, such that $\mathbb{P}(\boldsymbol{\mathcal{R}} \geq \boldsymbol{0}) = 1$. Additionally, consider $\boldsymbol{U} = (U_1,\ldots,U_p)^\top$ as the real-valued latent random vector of independent and absolutely continuous random variables with support $\left[-1,1\right]^p$ and joint distribution function $F_{\boldsymbol{U}}$. Then, the interval-valued random vector $\boldsymbol{\mathcal{X}}$ can be characterized as $\boldsymbol{\mathcal{\tilde{X}}}=(\boldsymbol{\mathcal{X}},F_{\boldsymbol{U}})$, with $\boldsymbol{\mathcal{X}}=(\boldsymbol{\mathcal{C}}^\top,\boldsymbol{\mathcal{R}}^\top)^\top$. In particular, if $\mathbb{P}(\boldsymbol{\mathcal{R}} = \boldsymbol{0}) = 1$ and $\mathbb{P}(\boldsymbol{U} = \boldsymbol{0}) = 1$, then $\boldsymbol{\mathcal{\tilde{X}}}$ reduces to a conventional random vector, $\boldsymbol{\mathcal{X}}=\boldsymbol{\mathcal{C}}$. Finally, consider the real-valued random vector representing the microdata denoted by $\boldsymbol{V} = (V_1,\ldots,V_p)^\top\in\mathbb{R}^p$. Then, \cite{Oliveira2022} define $V_j=C_j+U_jR_j/2$, if $\mathbb{P}(R_j=0)=0$, and $V_j=C_j$, if $\mathbb{P}(R_j=0)=1$ and $\mathbb{P}(U_j=0)=1$ as the model linking the microdata and macrodata.

A natural way to compare intervals is the Mallows distance \citep{Mallows1972,Irpino2015}, which is a widely used metric in SDA. For interval-valued data, the Mallows distance is particularly relevant as it captures the unique structure of intervals, including its variability and microdata distribution. Let $\tilde{x}_1=((c_1,r_1)^\top,F_{1})$ and $\tilde{x}_2=((c_2,r_2)^\top,F_{2})$ be two univariate symbolic intervals, with $F_j$ and $F_{j}^{-1}$ the distribution function (with finite second moment) and quantile function of the microdata, respectively, for $j=1,2$. Then, the Mallows distance between $\tilde{x}_1$ and $\tilde{x}_2$ is defined as
\begin{equation}
    \label{eq:mallows_def}
    d_\mathrm{M}(\tilde{x}_1, \tilde{x}_2) = \left[\int_0^1 (F_{1}^{-1}(t) - F_{2}^{-1}(t))^2 \, dt\right]^\frac{1}{2}.
\end{equation}
This distance can be interpreted as the $L^2$ distance between the quantile functions of the microdata distributions.

Consider, now, that $\boldsymbol{\tilde{x}}_i=(\boldsymbol{x}_i,F_{\boldsymbol{U}})$ is the $i$-th realization of an interval-valued random vector $\boldsymbol{\mathcal{\tilde{X}}}$, with macrodata $\boldsymbol{x}_i=(\boldsymbol{c}_i^\top,\boldsymbol{r}_i^\top)^\top$, centers $\boldsymbol{c}_i=(c_{i1},\dots,c_{ip})^\top\in\mathbb{R}^p$, and ranges $\boldsymbol{r}_i=(r_{i1},\dots,r_{ip})^\top\in(\mathbb{R}^{+}_0)^p$, for $i=1,2$. Furthermore, since $\boldsymbol{U}$ has finite second moments, then, according to \cite{oliveira2025}, the Mallows distance between these intervals can be expressed as
\begin{equation}
    \label{eq:mallows}
    d_\mathrm{M}(\boldsymbol{x}_1, \boldsymbol{x}_2;F_{\boldsymbol{U}}) = \left[(\boldsymbol{c}_1-\boldsymbol{c}_2)^\top(\boldsymbol{c}_1-\boldsymbol{c}_2) + (\boldsymbol{r}_1-\boldsymbol{r}_2)^\top\boldsymbol{\Delta}(\boldsymbol{r}_1-\boldsymbol{r}_2) + (\boldsymbol{c}_1-\boldsymbol{c}_2)^\top\boldsymbol{\Psi}(\boldsymbol{r}_1-\boldsymbol{r}_2)\right]^\frac{1}{2},
\end{equation}
where $\boldsymbol{\Delta}=\mathrm{diag}(\delta_1,\dots,\delta_p)$, with $\delta_j=\mathbb{E}(U_j^2)/4$, $\boldsymbol{\Psi}=\mathrm{diag}(\mathbb{E}(U_1),\dots,\mathbb{E}(U_p))$, and $\mathrm{diag}(\boldsymbol{v})$ denotes the diagonal matrix whose main diagonal is the vector $\boldsymbol{v}$.

The Mallows distance is often used in SDA as a basis for some statistical concepts and methods. For instance, it can be used to define a notion of central tendency, which can serve as a representative of the data. The barycenter serves this purpose as the interval that minimizes the weighted sum of squared Mallows distances to a set of intervals \citep{Irpino2015,Irpino2006}. In fact, the population barycenter can be derived for any distribution of the microdata based on the model linking microdata and macrodata. Assuming that the centers and ranges, $\boldsymbol{\mathcal{C}}$ and $\boldsymbol{\mathcal{R}}$, have finite expected values, $\boldsymbol{\mu}_C$ and $\boldsymbol{\mu}_R$, respectively, and that the latent random vector $\boldsymbol{U}$ is independent of $\boldsymbol{\mathcal{X}}$, \cite{oliveira2025} showed that the population barycenter of $\boldsymbol{\mathcal{\tilde{X}}}$ is given by the interval
\begin{equation}
    \label{eq:barycenter}
    \boldsymbol{\mu}_B=(\boldsymbol{\mu}_C,\boldsymbol{\mu}_R,F_{\boldsymbol{U}}).
\end{equation}

Consider further that the centers and ranges, $\boldsymbol{\mathcal{C}}$ and $\boldsymbol{\mathcal{R}}$, have covariance matrices, $\boldsymbol{\Sigma}_{CC}$ and $\boldsymbol{\Sigma}_{RR}$, respectively, and $\boldsymbol{\Sigma}_{CR}$ is the covariance matrix between $\boldsymbol{\mathcal{C}}$ and $\boldsymbol{\mathcal{R}}$. Then, based on the Mallows distance and the barycenter, \cite{oliveira2025} also derived the closed expression for the symbolic covariance matrix of $\boldsymbol{\mathcal{\tilde{X}}}$,
\begin{equation}
    \label{eq:cov_matrix}
    \boldsymbol{\Sigma}_B=\boldsymbol{\Sigma}_{CC}+\dfrac{1}{4}\boldsymbol{\mathfrak{E}}_{UU}\bullet\boldsymbol{\Sigma}_{RR}+\dfrac{1}{2}\boldsymbol{\Sigma}_{CR}\boldsymbol{\Psi}+\dfrac{1}{2}\boldsymbol{\Psi}\boldsymbol{\Sigma}_{RC},
\end{equation}
where $[\boldsymbol{\mathfrak{E}}_{UU}]_{j\ell}=\mathcal{E}(U_j,U_\ell)$, $j\neq \ell$, with $\mathcal{E}(U_j,U_\ell)=\mathbb{E}(F_{U_j}^{-1}(T) F_{U_\ell}^{-1}(T))=\int_0^1 F_{U_j}^{-1}(t) F_{U_\ell}^{-1}(t) \, dt$, $T\sim\mathrm{Unif}(0,1)$, $[\boldsymbol{\mathfrak{E}}_{UU}]_{jj}=\mathbb{E}(U_j^2)$, $\boldsymbol{\Psi}=\mathrm{diag}(\mathbb{E}(U_1),\dots,\mathbb{E}(U_p))$, $\bullet$ denotes the Schur or entrywise product of matrices, and $[\boldsymbol{A}]_{j\ell}$ the entry $(j,\ell)$ of matrix $\boldsymbol{A}$. This covariance matrix captures the variability of the interval-valued random vector, taking into account the centers, ranges, and microdata distributions. It serves as a crucial component in various statistical analyses, including outlier detection, as it provides a measure of scatter and association that is representative of the unique structure of interval-valued data.

\section{Outlier Detection based on Robust Interval-Mahalanobis Distance}
\label{sec:outlier}
The presence of atypical observations in real-life datasets can significantly influence statistical analyses, leading to biased estimates and misleading conclusions. In particular, the sample estimators of the barycenter and the symbolic covariance matrix are sensitive to outliers, which can distort the estimation of location and scatter measures. This sensitivity can compromise the effectiveness of outlier detection methods that rely on these estimates. The IMCD algorithm \citep{loureiro2026} provides robust estimates of the barycenter and symbolic covariance matrix, extending the FastMCD algorithm \citep{fastMCD} to interval-valued data. This algorithm first identifies a subset of observations that minimizes the determinant of the symbolic covariance matrix, thus reducing the influence of outliers on these estimates. To improve efficiency, the observations are, then, reweighted based on robust distances and a cutoff value to refine the estimates. This cutoff value can be, for instance, based on the farness concept \citep{classmap}. For a detailed description of the IMCD algorithm, the reader is referred to \cite{loureiro2026}.

Another key concept in Multivariate Analysis is the Mahalanobis distance, with the definition for the interval-valued equivalent, called the Interval-Mahalanobis distance \citep{loureiro2026}, being given in Definition~\ref{def:int_mah_dist}. This distance incorporates the unique structure of interval-valued data, including the centers, ranges, and microdata distributions.

\begin{definition}[Interval-Mahalanobis Distance \protect{\citep[Def.~4.1]{loureiro2026}}]
    \label{def:int_mah_dist}
    Let $\boldsymbol{\tilde{x}}=(\boldsymbol{x},F_{\boldsymbol{U}})$ with $\boldsymbol{x}=(\boldsymbol{c}^\top,\boldsymbol{r}^\top)^\top$, $\boldsymbol{c}=(c_1,\dots,c_p)^\top\in\mathbb{R}^p$, $\boldsymbol{r}=(r_1,\dots,r_p)^\top\in(\mathbb{R}^{+}_0)^p$, and $\boldsymbol{U}=(U_1,\dots,U_p)^\top$ a latent random vector of independent random variables supported on $\left[-1,1\right]^p$ with joint distribution function $F_{\boldsymbol{U}}$. The Interval-Mahalanobis distance between $\boldsymbol{\tilde{x}}$ and the barycenter $\boldsymbol{\mu}_B$, as in \eqref{eq:barycenter}, of a population with symbolic covariance matrix $\boldsymbol{\Sigma}_B$, as in \eqref{eq:cov_matrix}, is defined as
    \begin{equation}
    \label{eq:d2_mah}
    \begin{aligned}
        d_\mathrm{IMah}(\boldsymbol{x};F_{\boldsymbol{U}},\boldsymbol{\mu}_B,\boldsymbol{\Sigma}_B)&=\Bigg[(\boldsymbol{c}-\boldsymbol{\mu}_C)^\top\boldsymbol{\Sigma}_B^{-1}(\boldsymbol{c}-\boldsymbol{\mu}_C)+\frac{1}{4}(\boldsymbol{r}-\boldsymbol{\mu}_R)^\top\left(\boldsymbol{\mathfrak{E}}_{UU}\bullet\boldsymbol{\Sigma}_B^{-1}\right)(\boldsymbol{r}-\boldsymbol{\mu}_R)\\
        &\qquad +(\boldsymbol{c}-\boldsymbol{\mu}_C)^\top\boldsymbol{\Sigma}_B^{-1}\boldsymbol{\Psi}(\boldsymbol{r}-\boldsymbol{\mu}_R)\Bigg]^\frac{1}{2}.
    \end{aligned}
    \end{equation}
\end{definition}

In what follows, $d_\mathrm{IMah}(\boldsymbol{x};F_{\boldsymbol{U}},\boldsymbol{\mu}_B,\boldsymbol{\Sigma}_B)$ will be denoted by $d_\mathrm{IMah}(\boldsymbol{x})$, for simplicity. In addition, a robust version of the Interval-Mahalanobis distance can be obtained by using the robust estimates of the barycenter and symbolic covariance matrix provided by the IMCD algorithm. In that case, $d_\mathrm{IMah}(\boldsymbol{x})$ measures the distance between an observation $\boldsymbol{x}$ and the robust location of the data (as estimated by the IMCD algorithm). For that reason, the squared robust Interval-Mahalanobis distance provides a robust measure of outlyingness for interval-valued observations, allowing for effective outlier detection while accounting for the complexities of the data structure. In that way, an observation $\boldsymbol{x}$ at a large distance to the robust location of the data is considered an outlier, i.e., if $d_\mathrm{IMah}(\boldsymbol{x})$ exceeds a cutoff value $\tau$.

Since the robust Interval-Mahalanobis distance has an unknown distribution, \cite{loureiro2026} suggest non-parametric cutoff values based on the farness concept \citep{classmap}. This concept involves performing a series of transformations on the squared robust Interval-Mahalanobis distances to obtain a distribution of outlier scores. To start with, the squared distances are standardized using the median and the median absolute deviation (MAD). Then, the robust Yeo-Johnson transformation \citep{yeojohnson2000,Raymaekers2024} is applied to account for skewness in the distribution of the standardized distances. Next, to obtain an approximately standard Normal distribution, another standardization is performed using the median and MAD of the transformed distances. Finally, the cumulative distribution function of these transformed distances is computed to obtain outlier scores that can be interpreted as probabilities of being an outlier. The cutoff value can then be set, for example, at $0.95$, such that observations with an outlier score exceeding this threshold are flagged as outliers, corresponding to an approximate $5\%$ false alarm rate under the standard Normal assumption.

\section{Explainability with Shapley Values}
\label{sec:explainability}
With the growing interest and dependability in the field of AI, the need has arisen for methods that help humans understand the reasoning behind the decisions or predictions made by AI models. This has led to the development of XAI, which aims at making the decisions and inner workings of AI models more transparent and interpretable to humans. XAI helps fighting the ``black-box'' paradigm of many AI models, being particularly relevant in high-stakes applications such as healthcare, finance, and criminal justice. Some of the most widely used XAI methods include Local Interpretable Model-Agnostic Explanations (LIME), and Shapley values. For a more comprehensive overview of XAI methods, we refer the reader to \cite{XAI2020} and \cite{XAI2024}.

In this work, we focus on Shapley values, which were first introduced by \cite{Shapley1953} as a solution concept in cooperative game theory. They provide a way to fairly distribute the total value generated by a coalition (set) of players among the individual players based on their contributions. Specifically, a coalitional game $(P, v)$ consists of a set of players $P=\{1,2,\dots,p\}$ and a characteristic or value function $v$ that assigns the worth $v(S)\in\mathbb{R}$ to each coalition $S\subseteq P$ of players. The Shapley value for player $j$ quantifies its contribution to the total value generated by the coalition and is defined as 
\begin{equation}
    \label{eq:shapley_game}
    \phi_j(v)=\sum_{S\subseteq P\setminus\{j\}}\frac{|S|!(p-|S|-1)!}{p!}\left(v(S\cup\{j\})-v(S)\right),
\end{equation}
where $|\cdot|$ denotes set cardinality. In this way, $\phi_j(v)$ represents the average marginal contribution of player $j$ across all possible coalitions $S$ that do not include player $j$.

The Shapley value is, in fact, the unique solution that satisfies the desirable axioms \citep{Young1985}: efficiency, whereby the total value is distributed among all players, i.e., $\sum_{j=1}^t\phi_j(v)=v(P)$; symmetry, meaning that if two players contribute equally to all coalitions, they receive the same Shapley value, i.e., if $v(S\cup\{j\})=v(S\cup\{\ell\})$ for all coalitions $S$ not containing $j$ or $\ell$, then $\phi_j(v)=\phi_\ell(v)$; the dummy player property, that is if a player does not contribute to any coalition, its Shapley value is zero, i.e., if $v(S\cup\{j\})=v(S)$ for all coalitions $S$ not containing $j$, then $\phi_j(v)=0$; and monotonicity, i.e., if for any two games $(P, v_1)$ and $(P, v_2)$ and all coalitions $S\subseteq P$ we have $v_1(S\cup\{j\})-v_1(S)\leq v_2(S\cup\{j\})-v_2(S)$, then $\phi_j(v_1)\leq \phi_j(v_2)$ for all $j\in P$.

When applying this concept to ML, Shapley values quantify how much each variable contributes to the model's output, considering all possible combinations of variables. They have gained popularity since SHapley Additive exPlanations (SHAP) was introduced by \cite{SHAP2017}, making them a widely used tool in the XAI toolbox.

In the context of outlier detection, explainability becomes relevant in helping understand why a particular observation is flagged as an outlier and which variables contribute most to its outlyingness. Shapley values fairly attribute the overall outlyingness (as measured by the robust distance) to individual variables, providing insight into the role each variable plays in the detection of outliers. In particular, \cite{shapley} proposed a decomposition of the squared Mahalanobis distance of an observation into outlyingness contributions of each variable using Shapley values. Unlike SHAP, which can be computationally expensive, the Shapley value for the Mahalanobis distance was explicitly derived as a linear function, thus being of low computational cost. This approach is extended to interval-valued data in the next section, where we derive the Shapley values for the squared Interval-Mahalanobis distance given in Definition \ref{def:int_mah_dist}.

\subsection{Shapley Values for Interval-Valued Data}
\label{sec:shapley}
In this section, we follow the link between Shapley values and the Mahalanobis distance established by \cite{shapley} to extend the definition of the Shapley values to interval-valued data. Here, the set of players becomes the set of variables $P=\{1,\dots,p\}$, and a coalition is a subset $S\subseteq P$. Furthermore, the characteristic function $v$ is now $d_\mathrm{IMah}^2(\boldsymbol{x})$. Then, the Shapley value of a variable $j\in P$, in this context, is defined as the average marginal contribution of variable $j$ to $d_\mathrm{IMah}^2(\boldsymbol{x})$ across all subsets of variables that do not include $j$. This is reflected in the following theorem.

\begin{theorem}
\label{thm:shapley}
    Let $\boldsymbol{\tilde{x}}=(\boldsymbol{x},F_{\boldsymbol{U}})$ with $\boldsymbol{x}=(\boldsymbol{c}^\top,\boldsymbol{r}^\top)^\top$, $\boldsymbol{c}=(c_1,\dots,c_p)^\top\in\mathbb{R}^p$, $\boldsymbol{r}=(r_1,\dots,r_p)^\top\in(\mathbb{R}^{+}_0)^p$, and $\boldsymbol{U}=(U_1,\dots,U_p)^\top$ a random vector of independent random variables supported on $\left[-1,1\right]^p$ with distribution function $F_{\boldsymbol{U}}$.
    Let $\boldsymbol{\mu}_B$ be the barycenter, as in \eqref{eq:barycenter}, and $\boldsymbol{\Sigma}_B$ be the covariance matrix, as in \eqref{eq:cov_matrix}. For $j=1,\dots,p$, the Shapley value of variable $j$ is
        \begin{align}
            \label{eq:shapley}
            \phi_j(\boldsymbol{x};F_{\boldsymbol{U}},\boldsymbol{\mu}_B,\boldsymbol{\Sigma}_B)&=\sum_{S\subseteq P\setminus\{j\}}\frac{|S|!(p-|S|-1)!}{p!}\left(d_\mathrm{IMah}^2(\boldsymbol{x}^{S\cup \{j\}})-d_\mathrm{IMah}^2(\boldsymbol{x}^S)\right)\nonumber\\
            &=(c_j-\mu_{C_j})\sum_{\ell=1}^p(c_\ell-\mu_{C_\ell})\omega_{j\ell}+\frac{1}{4}(r_j-\mu_{R_j})\sum_{\ell=1}^p(r_\ell-\mu_{R_\ell})\mathcal{E}_{j\ell}\omega_{j\ell}\\
            &+\frac{1}{2}(c_j-\mu_{C_j})\sum_{\ell=1}^p(r_\ell-\mu_{R_\ell})\psi_{\ell\ell}\omega_{j\ell}+\frac{1}{2}(r_j-\mu_{R_j})\psi_{jj}\sum_{\ell=1}^p(c_\ell-\mu_{C_\ell})\omega_{j\ell}\nonumber,
        \end{align}
    where $\mathcal{E}_{j\ell}=[\boldsymbol{\mathfrak{E}}_{UU}]_{j\ell}$, $\psi_{j\ell}=[\boldsymbol{\Psi}]_{j\ell}$, $\omega_{j\ell}=[\boldsymbol{\Sigma}_B^{-1}]_{j\ell}$, $\ell=1,\dots,p$, and $\boldsymbol{x}^S=(x_1^S,\dots,x_p^S)^\top$ with $x_j^S=x_j=(c_j,r_j)^\top$, if $j\in S$, and $x_j^S=\mu_j=(\mu_{C_j},\mu_{R_j})^\top$, if $j\notin S$.
\end{theorem}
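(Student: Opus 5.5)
The plan is to write $d_\mathrm{IMah}^2(\boldsymbol{x}^S)$ as a quadratic form in the deviations and then use the fact that the Shapley value of a quadratic game splits over its monomials. Put $\boldsymbol{a}=\boldsymbol{c}-\boldsymbol{\mu}_C$ and $\boldsymbol{b}=\boldsymbol{r}-\boldsymbol{\mu}_R$. By construction of $\boldsymbol{x}^S$, the deviations of $\boldsymbol{x}^S$ from the barycenter are $a_\ell\mathbb{1}[\ell\in S]$ and $b_\ell\mathbb{1}[\ell\in S]$. Expanding \eqref{eq:d2_mah} entrywise (using that $\boldsymbol{\Psi}$ is diagonal) gives
\begin{equation*}
v(S)=d_\mathrm{IMah}^2(\boldsymbol{x}^S)=\sum_{k,\ell\in S}\Big(a_ka_\ell\omega_{k\ell}+\tfrac14 b_kb_\ell\mathcal{E}_{k\ell}\omega_{k\ell}+a_k\omega_{k\ell}\psi_{\ell\ell}b_\ell\Big)=\sum_{k,\ell\in S}\gamma_{k\ell},
\end{equation*}
where $\gamma_{k\ell}$ collects the three summands. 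So $v$ is a sum over ordered pairs $(k,\ell)$ of the games $v_{k\ell}(S)=\gamma_{k\ell}\mathbb{1}[\{k,\ell\}\subseteq S]$.

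Since \eqref{eq:shapley_game} is linear in $v$, I would compute the Shapley value of each unanimity-type game $v_{k\ell}$ and add the results. For $k=\ell$, the game $v_{kk}$ gives worth $\gamma_{kk}$ exactly to coalitions containing $k$. Every other player is a dummy, and efficiency gives $\phi_j(v_{kk})=\gamma_{kk}\mathbb{1}[j=k]$. For $k\neq\ell$, $v_{k\ell}$ is a scaled unanimity game on $\{k,\ell\}$. By symmetry, dummy and efficiency, $\phi_j(v_{k\ell})=\tfrac12\gamma_{k\ell}$ if $j\in\{k,\ell\}$ and $0$ otherwise. I could also check this directly: the marginal contribution of $j=k$ is nonzero only when $\ell\in S$, and summing the weights $|S|!(p-|S|-1)!/p!$ over such $S$ gives $1/2$. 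This is the standard computation of \cite{shapley}.

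Adding up gives
\begin{equation*}
\phi_j=\gamma_{jj}+\tfrac12\sum_{\ell\neq j}(\gamma_{j\ell}+\gamma_{\ell j})=\tfrac12\sum_{\ell=1}^p(\gamma_{j\ell}+\gamma_{\ell j}).
\end{equation*}
Next I substitute $\gamma$ and use the symmetry of $\boldsymbol{\Sigma}_B^{-1}$ and $\boldsymbol{\mathfrak{E}}_{UU}$, i.e. $\omega_{\ell j}=\omega_{j\ell}$ and $\mathcal{E}_{\ell j}=\mathcal{E}_{j\ell}$. The center–center term becomes $a_j\sum_\ell a_\ell\omega_{j\ell}$, and the range–range term becomes $\tfrac14 b_j\sum_\ell b_\ell\mathcal{E}_{j\ell}\omega_{j\ell}$. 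The cross term splits as
\begin{equation*}
\tfrac12 a_j\sum_\ell \omega_{j\ell}\psi_{\ell\ell}b_\ell+\tfrac12\psi_{jj}b_j\sum_\ell a_\ell\omega_{j\ell},
\end{equation*}
which matches the two last summands of \eqref{eq:shapley}.

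The main point needing care is the first step. I have to check that replacing the non-coalition coordinates by the barycenter components $(\mu_{C_\ell},\mu_{R_\ell})$ really zeroes both the center and range deviations of those variables, while the matrices $\boldsymbol{\Sigma}_B$, $\boldsymbol{\mathfrak{E}}_{UU}$ and $\boldsymbol{\Psi}$ stay fixed and are not restricted to $S$. Only then is $v$ exactly a quadratic form in the masked deviation vector. I also need to handle the asymmetric cross term $\boldsymbol{a}^\top\boldsymbol{\Sigma}_B^{-1}\boldsymbol{\Psi}\boldsymbol{b}$ correctly: it is not symmetric in $(k,\ell)$, so both orderings must be kept. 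That asymmetry is exactly what produces the two distinct $\psi$ terms.
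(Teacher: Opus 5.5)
Your proof is correct, but it is organized differently from the paper's. The paper does not split the game into component games. It expands the marginal contribution $d_\mathrm{IMah}^2(\boldsymbol{x}^{S\cup\{j\}})-d_\mathrm{IMah}^2(\boldsymbol{x}^S)$ into a part linear in the deviations of the variables in $S$ plus a constant diagonal part. It then evaluates the weighted sum over $S$ by direct counting, using $\sum_{S}w(|S|)=1$, the count $\binom{p-2}{s-1}$ of size-$s$ coalitions containing a fixed $\ell$, and $\sum_{s=1}^{p-1}s/(p(p-1))=1/2$.

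You instead write $v=\sum_{k,\ell}v_{k\ell}$ with $v_{k\ell}(S)=\gamma_{k\ell}\mathbf{1}[\{k,\ell\}\subseteq S]$. You then use linearity of \eqref{eq:shapley_game} together with the Shapley value of a scaled unanimity game, obtained from symmetry, dummy and efficiency, or from the same $1/2$ weight sum. The arithmetic core is identical, but your route buys three things:
\begin{itemize}
\item It shows structurally why each off-diagonal coefficient is split equally between its two players.
\item It avoids the paper's somewhat loose interchange of the sum over $S$ with the inner sum over $\ell\in S$.
\item It makes the non-symmetric cross term automatic through $\gamma_{j\ell}+\gamma_{\ell j}$.
\end{itemize}
It also gives Theorem~\ref{thm:shapley_interaction} almost for free. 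The Harsanyi dividends of $v$ are $\gamma_{kk}$ on singletons and $\gamma_{k\ell}+\gamma_{\ell k}$ on pairs, and they vanish on larger sets. Hence $\Delta_{\{j,\ell\}}v(T)=\gamma_{j\ell}+\gamma_{\ell j}$ for every $T$, and since the interaction weights sum to one, this is exactly \eqref{eq:shapley_interaction}.

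The paper's computation, in exchange, is self-contained: it uses only the defining formula and elementary counting, without invoking the axiomatic properties of the Shapley value.
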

\begin{proof}
    The proof of this theorem is given in \ref{sec:proof_shapley}.
\end{proof}

The notation can once again be simplified by letting $\phi_j(\boldsymbol{x}):=\phi_j(\boldsymbol{x};F_{\boldsymbol{U}},\boldsymbol{\mu}_B,\boldsymbol{\Sigma}_B)$. The Shapley value can also be expressed in vector form, as shown in the following corollary.

\begin{corollary}
    Under the conditions of \autoref{thm:shapley}, the vector of Shapley values of an observation $\boldsymbol{x}$, $\boldsymbol{\phi}(\boldsymbol{x})=(\phi_1(\boldsymbol{x}),\dots,\phi_p(\boldsymbol{x}))^\top$, is given by
    \begin{equation}
    \label{eq:shapley_vector}
    \begin{aligned}
        \boldsymbol{\phi}(\boldsymbol{x})
        &=(\boldsymbol{c}-\boldsymbol{\mu}_C)\bullet\left[\boldsymbol{\Sigma}_B^{-1}(\boldsymbol{c}-\boldsymbol{\mu}_C)\right]+\frac{1}{4}(\boldsymbol{r}-\boldsymbol{\mu}_R)\bullet\left[\left(\boldsymbol{\mathfrak{E}}_{UU}\bullet\boldsymbol{\Sigma}_B^{-1}\right)(\boldsymbol{r}-\boldsymbol{\mu}_R)\right]\\
        &\quad+\frac{1}{2}(\boldsymbol{c}-\boldsymbol{\mu}_C)\bullet\left[\boldsymbol{\Sigma}_B^{-1}\boldsymbol{\Psi}(\boldsymbol{r}-\boldsymbol{\mu}_R)\right]+\frac{1}{2}(\boldsymbol{r}-\boldsymbol{\mu}_R)\bullet\left[\boldsymbol{\Psi}\boldsymbol{\Sigma}_B^{-1}(\boldsymbol{c}-\boldsymbol{\mu}_C)\right].
    \end{aligned}
    \end{equation}
\end{corollary}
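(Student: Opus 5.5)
The corollary is a rewriting of the componentwise formula of \autoref{thm:shapley}. I will show that the $j$-th entry of each of the four vector terms in \eqref{eq:shapley_vector} equals the corresponding scalar term in \eqref{eq:shapley}. The only tools are the definition of the Schur product, $[\boldsymbol{a}\bullet\boldsymbol{b}]_j=a_jb_j$, and the fact that $\boldsymbol{\Psi}$ is diagonal, so $[\boldsymbol{\Psi}]_{j\ell}=\psi_{jj}$ if $\ell=j$ and $0$ otherwise.

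\textbf{First term.} Since $[\boldsymbol{\Sigma}_B^{-1}(\boldsymbol{c}-\boldsymbol{\mu}_C)]_j=\sum_{\ell}\omega_{j\ell}(c_\ell-\mu_{C_\ell})$, the $j$-th entry of $(\boldsymbol{c}-\boldsymbol{\mu}_C)\bullet[\boldsymbol{\Sigma}_B^{-1}(\boldsymbol{c}-\boldsymbol{\mu}_C)]$ is $(c_j-\mu_{C_j})\sum_\ell(c_\ell-\mu_{C_\ell})\omega_{j\ell}$.

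\textbf{Second term.} The matrix $\boldsymbol{\mathfrak{E}}_{UU}\bullet\boldsymbol{\Sigma}_B^{-1}$ has $(j,\ell)$ entry $\mathcal{E}_{j\ell}\omega_{j\ell}$. Hence the $j$-th entry of $\frac14(\boldsymbol{r}-\boldsymbol{\mu}_R)\bullet[(\boldsymbol{\mathfrak{E}}_{UU}\bullet\boldsymbol{\Sigma}_B^{-1})(\boldsymbol{r}-\boldsymbol{\mu}_R)]$ is $\frac14(r_j-\mu_{R_j})\sum_\ell(r_\ell-\mu_{R_\ell})\mathcal{E}_{j\ell}\omega_{j\ell}$.

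\textbf{Cross terms.} Here the diagonality of $\boldsymbol{\Psi}$ does the work.
\begin{itemize}
\item Right multiplication gives $[\boldsymbol{\Sigma}_B^{-1}\boldsymbol{\Psi}]_{j\ell}=\omega_{j\ell}\psi_{\ell\ell}$. So the $j$-th entry of the third term is $\frac12(c_j-\mu_{C_j})\sum_\ell(r_\ell-\mu_{R_\ell})\psi_{\ell\ell}\omega_{j\ell}$.
\item Left multiplication gives $[\boldsymbol{\Psi}\boldsymbol{\Sigma}_B^{-1}]_{j\ell}=\psi_{jj}\omega_{j\ell}$. So the $j$-th entry of the fourth term is $\frac12(r_j-\mu_{R_j})\psi_{jj}\sum_\ell(c_\ell-\mu_{C_\ell})\omega_{j\ell}$.
\end{itemize}

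Summing the four entries gives exactly $\phi_j(\boldsymbol{x})$ as stated in \autoref{thm:shapley}, for every $j=1,\dots,p$, which proves the vector identity.

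There is no genuine obstacle. The only point needing care is the cross terms: the order of $\boldsymbol{\Psi}$ and $\boldsymbol{\Sigma}_B^{-1}$ decides whether $\psi_{\ell\ell}$ sits inside the sum or $\psi_{jj}$ is factored outside, and this must be matched correctly to the third and fourth summands of \eqref{eq:shapley}.
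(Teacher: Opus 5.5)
Your proposal is correct and matches the paper's implicit argument. The paper gives no separate proof of the corollary; it is simply the entrywise reading of \autoref{thm:shapley}, resting on the Schur-product definition and on the identities $[\boldsymbol{\Sigma}_B^{-1}\boldsymbol{\Psi}]_{j\ell}=\omega_{j\ell}\psi_{\ell\ell}$ and $[\boldsymbol{\Psi}\boldsymbol{\Sigma}_B^{-1}]_{j\ell}=\psi_{jj}\omega_{j\ell}$, which the paper records at the start of its appendix proof of the theorem.
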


Similarly to the conventional case, the Shapley value for the Interval-Mahalanobis distance is of linear complexity, as it can be computed in closed form without the need to evaluate the distance for all possible subsets of variables. This is a significant advantage over other XAI methods, such as SHAP, which can be computationally expensive due to the combinatorial nature of the problem.

In particular, we know from the Shapley value efficiency property that the sum of the Shapley values of all variables equals the squared Interval-Mahalanobis distance of the observation to the barycenter, i.e., $\sum_{j=1}^p\phi_j(\boldsymbol{x})=d_\mathrm{IMah}^2(\boldsymbol{x})$. This property ensures that the contributions of all variables sum up to the total outlyingness of the observation, as measured by the robust distance. In this way, if $\phi_j(\boldsymbol{x})$ is large, it means that variable $j$ contributes significantly to the outlyingness of observation $\boldsymbol{x}$. We also note that negative Shapley values are possible, indicating that variable $j$ has a stabilizing effect on the outlyingness of the observation. More specifically, if $\phi_j(\boldsymbol{x})<0$, it means that replacing the value of variable $j$ in the observation $\boldsymbol{x}$ with its barycenter value $\mu_j$ would lead to an average increase in $d_\mathrm{IMah}^2(\boldsymbol{x})$. In other words, the variable $j$ is mitigating the outlyingness of the observation $\boldsymbol{x}$, i.e., it is pulling the observation towards the barycenter. This is particularly useful in the context of outlier detection, as it allows identifying which variables are driving the outlyingness and which ones are weakening it.

The Shapley value provides essentially an additive decomposition of the squared Interval-Mahalanobis distance. Therefore, looking at the expression in \eqref{eq:shapley_vector}, we can also decompose the Shapley value into contributions of the centers, $(\boldsymbol{c}-\boldsymbol{\mu}_C)\bullet\left[\boldsymbol{\Sigma}_B^{-1}(\boldsymbol{c}-\boldsymbol{\mu}_C)\right]$, the ranges, $1/4(\boldsymbol{r}-\boldsymbol{\mu}_R)\bullet\left[\left(\boldsymbol{\mathfrak{E}}_{UU}\bullet\boldsymbol{\Sigma}_B^{-1}\right)(\boldsymbol{r}-\boldsymbol{\mu}_R)\right]$, and the cross centers-ranges, $1/2(\boldsymbol{c}-\boldsymbol{\mu}_C)\bullet\left[\boldsymbol{\Sigma}_B^{-1}\boldsymbol{\Psi}(\boldsymbol{r}-\boldsymbol{\mu}_R)\right]+1/2(\boldsymbol{r}-\boldsymbol{\mu}_R)\bullet\left[\boldsymbol{\Psi}\boldsymbol{\Sigma}_B^{-1}(\boldsymbol{c}-\boldsymbol{\mu}_C)\right]$. This furthers outlyingness explainability in the context of interval-valued data, adding a new level to the analysis.

Another interesting aspect of the Shapley value is its connection to the concept of cellwise outliers (see, e.g., \cite{Raymaekers_Rousseeuw_2021}), which are observations that are not outlying in their entirety but contain specific cells (variable values) that are outlying. When considering the Shapley values per variable, we can identify which observations behave as outliers in specific variables, even if they are not outliers in the overall distance. Thus, the Shapley value provides a more granular understanding of outliers, allowing us to not only identify which variables are driving the outlyingness of an observation but also to flag potential cellwise outliers that may be masked when looking at the overall distance.

\subsection{Shapley Interaction Index}
\label{sec:shapley_interaction}
A natural extension of the Shapley value is the Shapley interaction index, another concept from cooperative game theory \citep{Grabisch1999,Fujimoto2006}. The Shapley interaction index for $Q$ is defined as
\begin{equation}
    \label{eq:shapley_interaction_game}
    \Phi_Q(v)=\sum_{T\subseteq P\setminus Q}\frac{|T|!(p-|T|-|Q|)!}{(p-|Q|+1)!}\Delta_Q\, v(T),
\end{equation}
where $\Delta_Q v(T)=\sum_{S\subseteq Q}(-1)^{|Q|-|S|}v(T\cup S)$. The Shapley interaction index $\Phi_Q(v)$ quantifies the interaction effect of the coalition $Q$ on the total value generated by the coalition, taking into account all possible coalitions that do not include $Q$. This concept has gained popularity in the context of XAI as SHAP Interaction Values \citep{SHAP2018}, which account for interactions between variables. In this way, we can obtain information not only about the individual contributions of each variable but also about their joint effect.

When considering outlier detection, the Shapley interaction index quantifies the contribution of a pair of variables to the outlyingness of an observation, taking into account their joint effect. This is particularly relevant in multivariate data, where variables can interact in complex ways. To address this, \cite{shapley} generalized their proposal of Shapley value to the Shapley interaction index. We extend this concept to interval-valued data, with \autoref{thm:shapley_interaction}, where $Q$ is now a pair of variables $(j,\ell)$, and the Shapley interaction index $\Phi_{j\ell}(v)$ measures the pairwise contribution of variables $j$ and $\ell$ to $d_\mathrm{IMah}^2(\boldsymbol{x})$.

\begin{theorem}
\label{thm:shapley_interaction}
    Under the conditions of \autoref{thm:shapley}, let $T\subseteq P\setminus\{j,\ell\}$ be a subset of variables that does not include the variable pair $(j,\ell)$.
    The Shapley interaction indexes for the pairs $(j,\ell)$, $j,\ell=1,\dots,p$, are collected in the $p\times p$ matrix $\boldsymbol{\Phi}(\boldsymbol{x})=\boldsymbol{\Phi}(\boldsymbol{x}; F_{\boldsymbol{U}}, \boldsymbol{\mu}_B,\boldsymbol{\Sigma}_B)$, where $[\boldsymbol{\Phi}]_{j\ell}=\Phi_{j\ell}$. Then, the off-diagonal elements are
    \begin{equation}
    \label{eq:shapley_interaction}
    \begin{aligned}
        \Phi_{j\ell}(\boldsymbol{x})&=\sum_{T\subseteq P\setminus\{j,\ell\}}\frac{|T|!(p-|T|-2)!}{(p-1)!}\Delta_{\{j,\ell\}}d_\mathrm{IMah}^2(\boldsymbol{x}^T)\\
        &=2(c_j-\mu_{C_j})(c_\ell-\mu_{C_\ell})\omega_{j\ell}+\frac{1}{2}(r_j-\mu_{R_j})(r_\ell-\mu_{R_\ell})\mathcal{E}_{j\ell}\omega_{j\ell}\\
        &\quad+(c_j-\mu_{C_j})(r_\ell-\mu_{R_\ell})\psi_{\ell\ell}\omega_{j\ell}+(r_j-\mu_{R_j})(c_\ell-\mu_{C_\ell})\psi_{jj}\omega_{j\ell},
    \end{aligned}
    \end{equation}
    with $\Delta_{\{j,\ell\}}d_\mathrm{IMah}^2(\boldsymbol{x}^T)=d_\mathrm{IMah}^2(\boldsymbol{x}^{T\cup\{j,\ell\}})-d_\mathrm{IMah}^2(\boldsymbol{x}^{T\cup\{j\}})-d_\mathrm{IMah}^2(\boldsymbol{x}^{T\cup\{\ell\}})+d_\mathrm{IMah}^2(\boldsymbol{x}^T)$,
    and the diagonal elements are
    \begin{equation}
    \label{eq:shapley_interaction_diag}
    \begin{aligned}
        \Phi_{jj}(\boldsymbol{x})&=\phi_j(\boldsymbol{x})-\sum_{\ell\neq j}\Phi_{j\ell}(\boldsymbol{x})\\
        &=(c_j-\mu_{C_j})^2\omega_{jj}-(c_j-\mu_{C_j})\sum_{\ell\neq j}(c_\ell-\mu_{C_\ell})\omega_{j\ell}\\
        &\quad+\frac{1}{4}(r_j-\mu_{R_j})^2\mathcal{E}_{jj}\omega_{jj}-\frac{1}{4}(r_j-\mu_{R_j})\sum_{\ell\neq j}(r_\ell-\mu_{R_\ell})\mathcal{E}_{j\ell}\omega_{j\ell}\\
        &\quad+\frac{1}{2}(c_j-\mu_{C_j})(r_j-\mu_{R_j})\psi_{jj}\omega_{jj}-\frac{1}{2}(c_j-\mu_{C_j})\sum_{\ell\neq j}(r_\ell-\mu_{R_\ell})\psi_{\ell\ell}\omega_{j\ell}\\
        &\quad+\frac{1}{2}(r_j-\mu_{R_j})(c_j-\mu_{C_j})\psi_{jj}\omega_{jj}-\frac{1}{2}(r_j-\mu_{R_j})\psi_{jj}\sum_{\ell\neq j}(c_\ell-\mu_{C_\ell})\omega_{j\ell}.
    \end{aligned}
    \end{equation}
\end{theorem}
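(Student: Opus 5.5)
The plan is to write $d_\mathrm{IMah}^2(\boldsymbol{x}^S)$ as a quadratic form in the deviations of the variables in $S$ only, and then compute the second-order difference $\Delta_{\{j,\ell\}}$ directly. Set $a_k=c_k-\mu_{C_k}$ and $b_k=r_k-\mu_{R_k}$, and put $\mathbb{1}_S(k)=1$ if $k\in S$ and $0$ otherwise. Since $\boldsymbol{x}^S$ replaces the coordinates outside $S$ by the barycenter, the deviations of $\boldsymbol{x}^S$ are $\mathbb{1}_S(k)a_k$ and $\mathbb{1}_S(k)b_k$. Expanding \eqref{eq:d2_mah} entrywise, using that $\boldsymbol{\Psi}$ is diagonal and that $\boldsymbol{\Sigma}_B^{-1}$ is symmetric, gives
\begin{equation*}
d_\mathrm{IMah}^2(\boldsymbol{x}^S)=\sum_{k,m\in S}\Big(a_ka_m\omega_{km}+\tfrac14 b_kb_m\mathcal{E}_{km}\omega_{km}+a_k b_m\psi_{mm}\omega_{km}\Big).
\end{equation*}
This is the same expansion used for \autoref{thm:shapley}, so we can take it from there.

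The key step is the interaction difference. Write the expression above as $\sum_{k,m\in S} g(k,m)$ with
\begin{equation*}
g(k,m)=a_ka_m\omega_{km}+\tfrac14 b_kb_m\mathcal{E}_{km}\omega_{km}+a_kb_m\psi_{mm}\omega_{km}.
\end{equation*}
Because $v(S)$ is a sum over ordered pairs in $S\times S$, inclusion--exclusion shows that $\Delta_{\{j,\ell\}}v(T)$ keeps only the pairs that contain both $j$ and $\ell$. For $j\neq\ell$ these are $(j,\ell)$ and $(\ell,j)$, so
\begin{equation*}
\Delta_{\{j,\ell\}}v(T)=g(j,\ell)+g(\ell,j)\quad\text{for every }T\subseteq P\setminus\{j,\ell\},
\end{equation*}
and this value does not depend on $T$. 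Using $\omega_{j\ell}=\omega_{\ell j}$ and $\mathcal{E}_{j\ell}=\mathcal{E}_{\ell j}$ (the latter holds by the definition of $\mathcal{E}$), the sum $g(j,\ell)+g(\ell,j)$ equals
\begin{equation*}
2a_ja_\ell\omega_{j\ell}+\tfrac12 b_jb_\ell\mathcal{E}_{j\ell}\omega_{j\ell}+a_jb_\ell\psi_{\ell\ell}\omega_{j\ell}+a_\ell b_j\psi_{jj}\omega_{j\ell}.
\end{equation*}
This is exactly the right-hand side of \eqref{eq:shapley_interaction}, apart from the weighting.

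Because the difference is constant in $T$, the remaining work is to show that the weights sum to one:
\begin{equation*}
\sum_{T\subseteq P\setminus\{j,\ell\}}\frac{|T|!(p-|T|-2)!}{(p-1)!}=1.
\end{equation*}
Grouping by $t=|T|$ gives $\sum_{t=0}^{p-2}\binom{p-2}{t}\frac{t!(p-2-t)!}{(p-1)!}=\sum_{t=0}^{p-2}\frac{(p-2)!}{(p-1)!}=1$, since there are $p-1$ terms each equal to $1/(p-1)$. This proves the off-diagonal formula.

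The diagonal entries are defined as $\Phi_{jj}=\phi_j-\sum_{\ell\neq j}\Phi_{j\ell}$, so they follow by substitution. Insert \autoref{thm:shapley}, split each of its sums over $\ell$ into the term $\ell=j$ and the terms $\ell\neq j$, and subtract the off-diagonal formula.

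The main obstacle I expect is the bookkeeping in this last step. The Shapley value contains the terms $a_j\sum_\ell a_\ell\omega_{j\ell}$ with coefficient $1$, while the interaction contains $2a_ja_\ell\omega_{j\ell}$. Subtracting leaves $-a_j\sum_{\ell\neq j}a_\ell\omega_{j\ell}$, and the same happens for the range term and for each cross term. The $\ell=j$ terms of \autoref{thm:shapley} then give the diagonal pieces, and the cross term $\tfrac12 a_jb_j\psi_{jj}\omega_{jj}$ appears twice, once from each cross sum. Each of these four matchings should be checked against \eqref{eq:shapley_interaction_diag}.
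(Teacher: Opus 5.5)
Your proposal is correct and follows essentially the same route as the paper. Both arguments expand $d_\mathrm{IMah}^2(\boldsymbol{x}^S)$ as a double sum over $S$, show that $\Delta_{\{j,\ell\}}d_\mathrm{IMah}^2(\boldsymbol{x}^T)$ is the same constant for every $T$, use that the weights $\frac{|T|!(p-|T|-2)!}{(p-1)!}$ sum to one, and obtain $\Phi_{jj}$ by substituting \autoref{thm:shapley} and collecting the $\ell=j$ and $\ell\neq j$ terms. The only difference is cosmetic: you get the second-order difference directly by inclusion--exclusion on the pair sum, whereas the paper writes it as $\Delta_\ell d_\mathrm{IMah}^2(\boldsymbol{x}^{T\cup\{j\}})-\Delta_\ell d_\mathrm{IMah}^2(\boldsymbol{x}^T)$ and reuses the first-order marginal-contribution formula from the proof of \autoref{thm:shapley}, where everything except the $h=j$ terms cancels.
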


\begin{proof}
    The proof of this theorem is given in \ref{sec:proof_shapley_interaction}.
\end{proof}

We note that the diagonal elements $\Phi_{jj}(\boldsymbol{x})$ are defined so that each Shapley value can be decomposed as $\phi_j(\boldsymbol{x})=\sum_{\ell=1}^p \Phi_{j\ell}(\boldsymbol{x})$. Consequently, the total outlyingness satisfies $d_\mathrm{IMah}^2(\boldsymbol{x})=\sum_{j=1}^p\sum_{\ell=1}^p \Phi_{j\ell}(\boldsymbol{x})$, thereby extending the efficiency property of the Shapley value to the Shapley interaction index. Therefore, the off-diagonal elements $\Phi_{j\ell}(\boldsymbol{x})$ represent the joint contribution of variables $j$ and $\ell$, while the diagonal elements $\Phi_{jj}(\boldsymbol{x})$ represent the individual contribution of variable $j$ after accounting for its interactions with all other variables. The Shapley interaction index matrix can also be expressed in a more compact form, leading to the following corollary.

\begin{corollary}
    Under the conditions of \autoref{thm:shapley_interaction}, the Shapley interaction index of an interval-valued observation $\boldsymbol{x}$ is given by the matrix
    \begin{equation}
        \begin{aligned}
            \boldsymbol{\Phi}(\boldsymbol{x})&=2(\boldsymbol{c}-\boldsymbol{\mu}_C)(\boldsymbol{c}-\boldsymbol{\mu}_C)^\top\bullet\boldsymbol{\Sigma}_B^{-1}+\frac{1}{2}(\boldsymbol{r}-\boldsymbol{\mu}_R)(\boldsymbol{r}-\boldsymbol{\mu}_R)^\top\bullet\boldsymbol{\mathfrak{E}}_{UU}\bullet\boldsymbol{\Sigma}_B^{-1}+\\
            &\quad+(\boldsymbol{c}-\boldsymbol{\mu}_C)(\boldsymbol{r}-\boldsymbol{\mu}_R)^\top\bullet\boldsymbol{\Sigma}_B^{-1}\boldsymbol{\Psi}+(\boldsymbol{r}-\boldsymbol{\mu}_R)(\boldsymbol{c}-\boldsymbol{\mu}_C)^\top\bullet\boldsymbol{\Psi}\boldsymbol{\Sigma}_B^{-1}-\mathrm{diag}(\boldsymbol{\phi}(\boldsymbol{x})).
        \end{aligned}
    \end{equation}
\end{corollary}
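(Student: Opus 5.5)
The statement just above is the matrix Corollary for $\boldsymbol{\Phi}(\boldsymbol{x})$. I would derive it entrywise from \autoref{thm:shapley_interaction}: check that every $(j,\ell)$ entry of the proposed matrix matches \eqref{eq:shapley_interaction} when $j\neq\ell$ and \eqref{eq:shapley_interaction_diag} when $j=\ell$. Set $\boldsymbol{a}=\boldsymbol{c}-\boldsymbol{\mu}_C$ and $\boldsymbol{b}=\boldsymbol{r}-\boldsymbol{\mu}_R$. Two elementary identities carry the argument. First, $[\boldsymbol{u}\boldsymbol{v}^\top\bullet\boldsymbol{A}]_{j\ell}=u_jv_\ell A_{j\ell}$. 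Second, because $\boldsymbol{\Psi}$ is diagonal, $[\boldsymbol{\Sigma}_B^{-1}\boldsymbol{\Psi}]_{j\ell}=\omega_{j\ell}\psi_{\ell\ell}$ and $[\boldsymbol{\Psi}\boldsymbol{\Sigma}_B^{-1}]_{j\ell}=\psi_{jj}\omega_{j\ell}$.

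\emph{Off-diagonal entries.} For $j\neq\ell$ the term $\mathrm{diag}(\boldsymbol{\phi}(\boldsymbol{x}))$ contributes nothing. The four Schur-product terms then give
\[
2a_ja_\ell\omega_{j\ell}+\tfrac12 b_jb_\ell\mathcal{E}_{j\ell}\omega_{j\ell}+a_jb_\ell\omega_{j\ell}\psi_{\ell\ell}+b_ja_\ell\psi_{jj}\omega_{j\ell},
\]
which is exactly \eqref{eq:shapley_interaction}.

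\emph{Diagonal entries.} For $j=\ell$ the Schur-product terms give
\[
2a_j^2\omega_{jj}+\tfrac12 b_j^2\mathcal{E}_{jj}\omega_{jj}+2a_jb_j\psi_{jj}\omega_{jj},
\]
and from this we subtract $\phi_j(\boldsymbol{x})$ as given by \eqref{eq:shapley}. I would split each sum over $\ell$ in $\phi_j$ into its $\ell=j$ term and its $\ell\neq j$ terms. The $\ell=j$ terms of $\phi_j$ are
\[
a_j^2\omega_{jj}+\tfrac14 b_j^2\mathcal{E}_{jj}\omega_{jj}+\tfrac12 a_jb_j\psi_{jj}\omega_{jj}+\tfrac12 b_j\psi_{jj}a_j\omega_{jj}.
\]
Subtracting them leaves $a_j^2\omega_{jj}+\tfrac14 b_j^2\mathcal{E}_{jj}\omega_{jj}+\tfrac12 a_jb_j\psi_{jj}\omega_{jj}+\tfrac12 b_ja_j\psi_{jj}\omega_{jj}$. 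The $\ell\neq j$ parts of $\phi_j$ enter with a minus sign. Together these reproduce \eqref{eq:shapley_interaction_diag} term by term.

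The only point needing care is the coefficient bookkeeping on the diagonal: doubling in the Schur terms, followed by subtracting the $\ell=j$ part of $\phi_j$, must leave exactly one copy. An equivalent and cleaner route is to use the first line of \eqref{eq:shapley_interaction_diag}, namely $\Phi_{jj}=\phi_j-\sum_{\ell\neq j}\Phi_{j\ell}$. Write $\boldsymbol{M}$ for the sum of the four Schur-product terms. One then checks that $\boldsymbol{M}$ satisfies $M_{jj}+\sum_{\ell\neq j}M_{j\ell}=2\phi_j$, so that $M_{jj}-\phi_j=\phi_j-\sum_{\ell\neq j}M_{j\ell}$. The row-sum identity follows by comparing $\boldsymbol{M}\boldsymbol{1}$ with $2\boldsymbol{\phi}(\boldsymbol{x})$ from \eqref{eq:shapley_vector}. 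Since $\boldsymbol{M}$ agrees with $\boldsymbol{\Phi}$ off the diagonal, this yields the diagonal formula directly.
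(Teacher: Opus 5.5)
Your proposal is correct and follows essentially the same route as the paper. The paper gives no separate argument for this corollary: it is simply the entrywise content of \autoref{thm:shapley_interaction} rewritten with Schur products, and your off-diagonal and diagonal checks are exactly that verification. Your alternative observation, that the four Schur-product terms have row sums $2\boldsymbol{\phi}(\boldsymbol{x})$, is a tidy way to obtain the diagonal directly from the defining relation $\Phi_{jj}=\phi_j-\sum_{\ell\neq j}\Phi_{j\ell}$.
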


\section{Examples}
\label{sec:examples}
We revisit the two case studies introduced in \cite{loureiro2026}, retaining the same data construction, preprocessing pipeline, and outlier detection method. The present work builds upon these results by incorporating the proposed Shapley value-based framework for outlier interpretability.

\subsection{Cars Dataset}
\label{sec:cars}
The Cars dataset is originally available from the \texttt{MAINT.Data} R package \citep{MAINT.Data}. It is composed of $27$ observations corresponding to different car models, with four interval-valued variables: \textit{Price}, \textit{Engine Capacity} (\textit{EngCap}), \textit{Top Speed}, and \textit{Acceleration}; and a class variable with values: \textit{Berlina}, \textit{Luxury}, \textit{Sportive}, and \textit{Utilitarian}. We follow the same methodology as in \cite{loureiro2026}, applying a logarithmic transformation to the \textit{Price} variable (denoted as \textit{lnPrice}), and, since no microdata are available, assuming a continuous uniform distribution for the latent variables. As for outlier detection, we computed the squared robust Interval-Mahalanobis distances of the observations to the barycenter and used the farness concept with a cutoff of $0.9$ to flag outliers, as described in Section \ref{sec:outlier}. In particular, the sample barycenter and covariance matrix were estimated using the reweighted IMCD algorithm with a subset of approximately $75\%$ of the total observations and a farness cutoff of $0.9$.

\autoref{fig:cars_dist} shows the squared robust Interval-Mahalanobis distances, ordered from largest to smallest, and the $0.9$ farness cutoff value, with the outliers marked in blue. We see that five observations were identified as outliers: \textit{MercedesClasseS}, \textit{Ferrari}, \textit{Porsche}, \textit{MercedesSL}, and \textit{HondaNSK}. In addition, the class variable is represented by the shape of the points, and we can see that the outliers are all from the \textit{Luxury} class, with only \textit{MercedesClasseS} being from the \textit{Sportive} class.

\begin{figure}[ht]
    \centering
    \includegraphics[width=\textwidth]{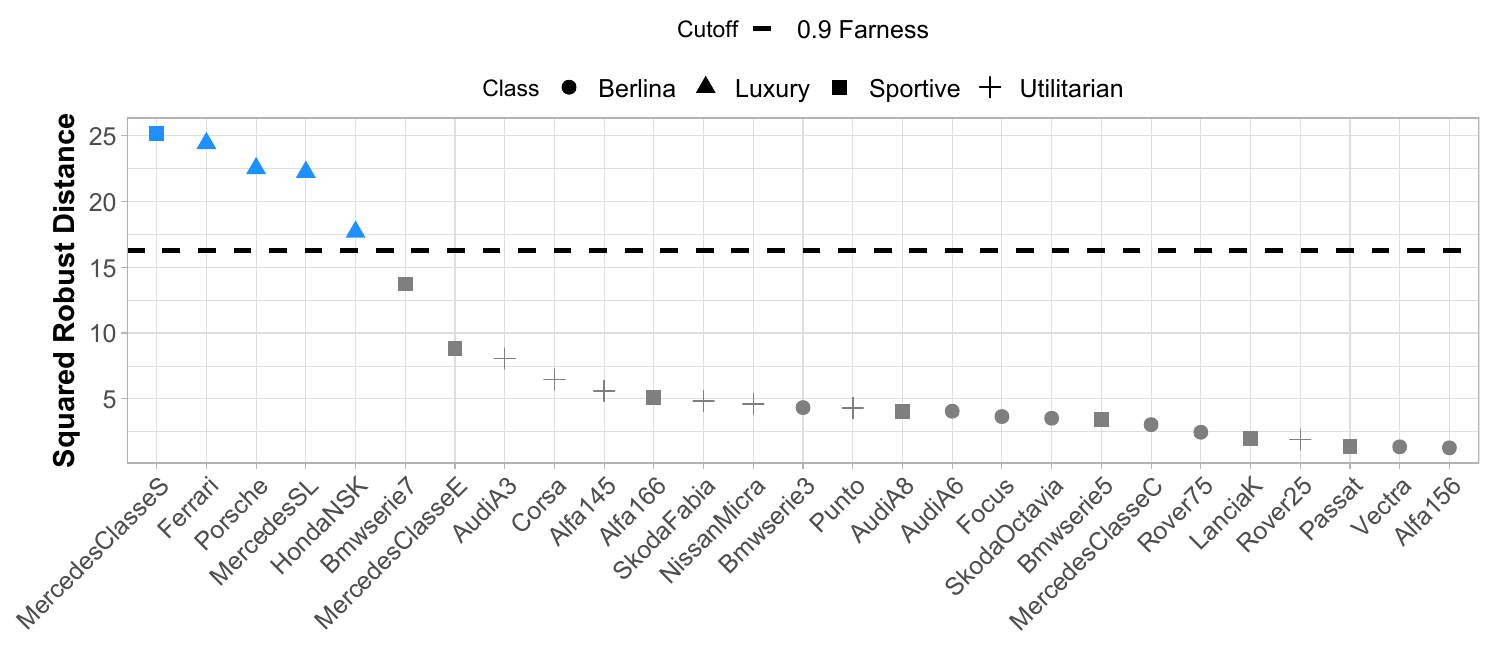}
    \caption{Plot of the squared robust Interval-Mahalanobis distances for the Cars dataset. The points shape represents the car model class, with the horizontal line corresponding to the $0.9$ farness cutoff value, and the outliers being marked in blue.}
    \label{fig:cars_dist}
\end{figure}

To further explore the reasons behind the outlyingness of these observations, we computed the Shapley values for the squared robust Interval-Mahalanobis distances. A barplot of the Shapley value decomposition for the six observations with the largest distances is presented in \autoref{fig:cars_shapley_barplot}. The black vertical lines represent the squared distances, while the dashed horizontal line represents the $0.9$ farness cutoff. We see that for \textit{MercedesClasseS}, \textit{MercedesSL}, and \textit{Bmwserie7} the variable with the largest Shapley value is the \textit{EngCap}, while for \textit{Ferrari} and \textit{Porsche} it is the \textit{Top Speed}. For \textit{Ferrari}, \textit{EngCap} has a Shapley value almost as large as \textit{Top Speed}. Moreover, for \textit{HondaNSK} the largest Shapley value is from \textit{lnPrice}, which also plays a part in the \textit{MercedesClasseS} case. Finally, we see that the \textit{lnPrice} plays a stabilizing role for the \textit{Ferrari}, \textit{Porsche}, and \textit{MercedesSL}. It is clear that the method is identifying car models with atypical engine capacities, top speeds, or a combination of these factors, as outliers. This is concurrent with the idea that luxury or sports cars are usually more powerful and faster than regular cars, standing apart from them.

\begin{figure}[ht]
    \centering
    \begin{subfigure}[t]{0.49\textwidth}
        \centering
        \includegraphics[width=\textwidth]{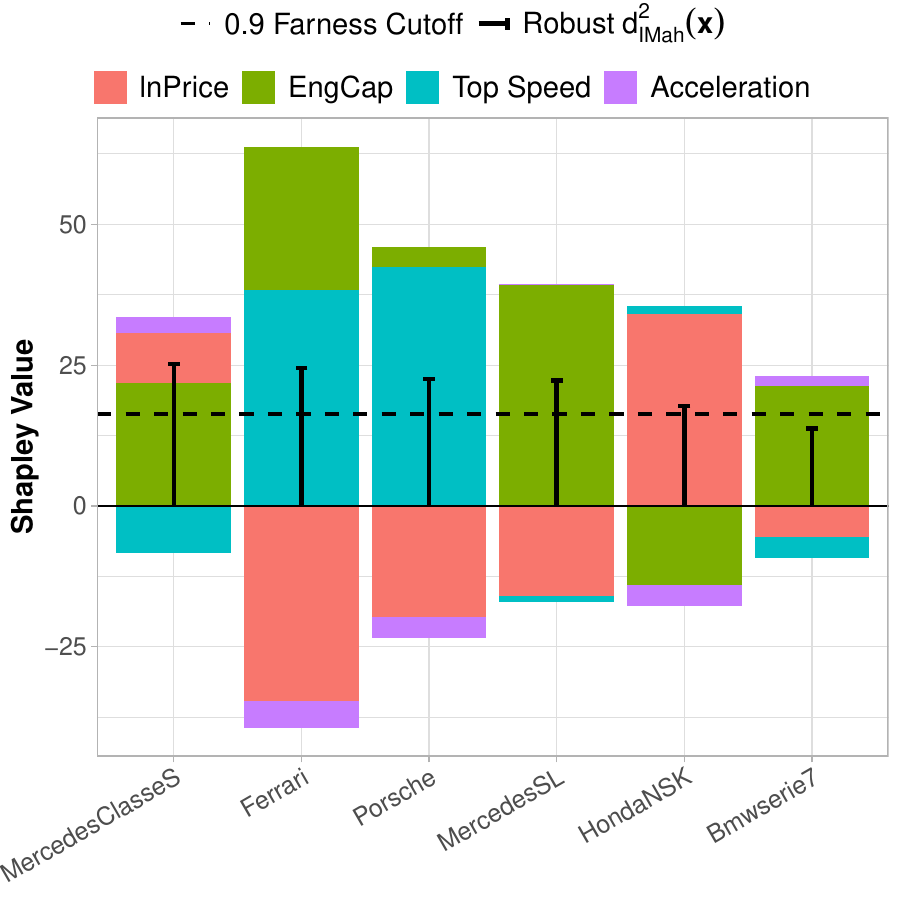}
        \caption{Barplot of the Shapley values.}
        \label{fig:cars_shapley_barplot}
    \end{subfigure}
    \hfill
    \begin{subfigure}[t]{0.49\textwidth}
        \centering
        \includegraphics[width=\textwidth]{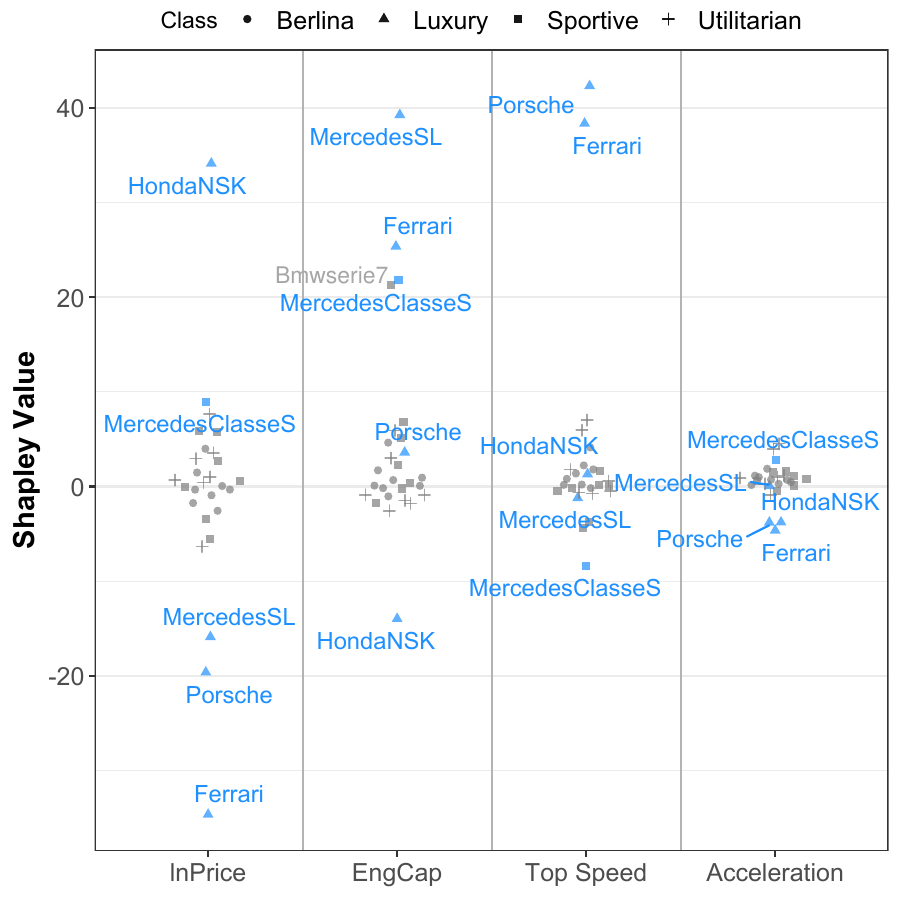}
        \caption{Beeswarm plot of the Shapley values.}
        \label{fig:cars_shapley_beeswarm}
    \end{subfigure}
    \caption{Barplot (a) and beeswarm plot (b) of the Shapley value decomposition of the squared robust Interval-Mahalanobis distances for the Cars dataset. In the barplot (a), only the six observations with the largest distances are shown, ordered from greatest to smallest distance represented by the black vertical lines, with the dashed horizontal line representing the $0.9$ farness cutoff. In the beeswarm plot (b), the outlying observations are marked in blue and the points shape represents the car model class.}
    \label{fig:cars_shapley}
\end{figure}

\autoref{fig:cars_shapley_beeswarm} in turn shows a beeswarm plot of the Shapley values for all observations per variable. Here we can see how the outliers compare to the regular observations. For instance, the \textit{Acceleration} variable is the one with more concentrated Shapley values, and, essentially, less contribution to the distances. \textit{EngCap} and \textit{TopSpeed}, in general, seem to have the greatest contributions to the distances, while the \textit{lnPrice}, for some cases, markedly reduces the outlyingness, with \textit{EngCap} also substantially reducing the outlyingness of \textit{HondaNSK}. The Shapley values when looked at from this perspective are closely related to the concept of cellwise outliers. Here, \textit{Bmwserie7} serves as an example of a potential cellwise outlier, with a high outlying value in the \textit{EngCap} variable but not being flagged as an outlier. This is due to the stabilizing effect of \textit{lnPrice} and \textit{TopSpeed}, as their Shapley values are negative, as shown in \autoref{fig:cars_shapley_barplot}.

As mentioned in Section \ref{sec:shapley}, the Shapley value can be decomposed into contributions of the centers, ranges, and cross centers-ranges. Since we assumed symmetric and identically distributed latent variables, the cross centers-ranges contributions are zero, and the Shapley value can be decomposed into contributions of the centers and ranges only. This decomposition is shown in \autoref{fig:cars_shapley_decomp} for the six observations with the largest distances. We see that for all the observations, the contribution of the centers is considerably larger than that of the ranges, which is expected since the expression for the distance, and consequently for the Shapley value, limits the ranges' weight. This suggests that the outlyingness of these observations is mainly driven by their centers, rather than by their ranges.

\begin{figure}[ht]
    \centering
    \begin{subfigure}[t]{0.64\textwidth}
        \centering
        \includegraphics[width=\textwidth]{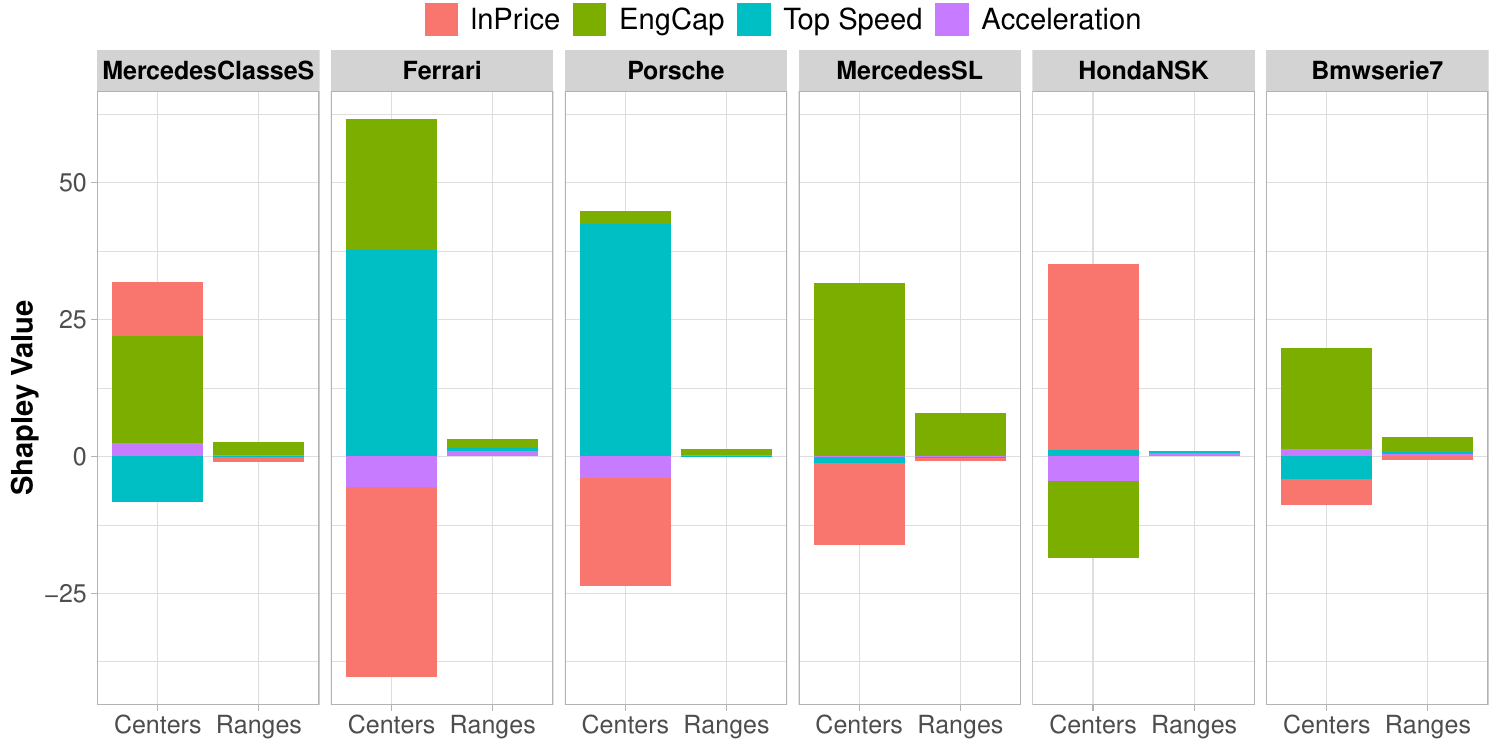}
        \caption{Barplot of the decomposition of the Shapley values by centers and ranges.}
        \label{fig:cars_shapley_decomp}
    \end{subfigure}
    \hfill
    \begin{subfigure}[t]{0.34\textwidth}
        \centering
        \includegraphics[width=\textwidth]{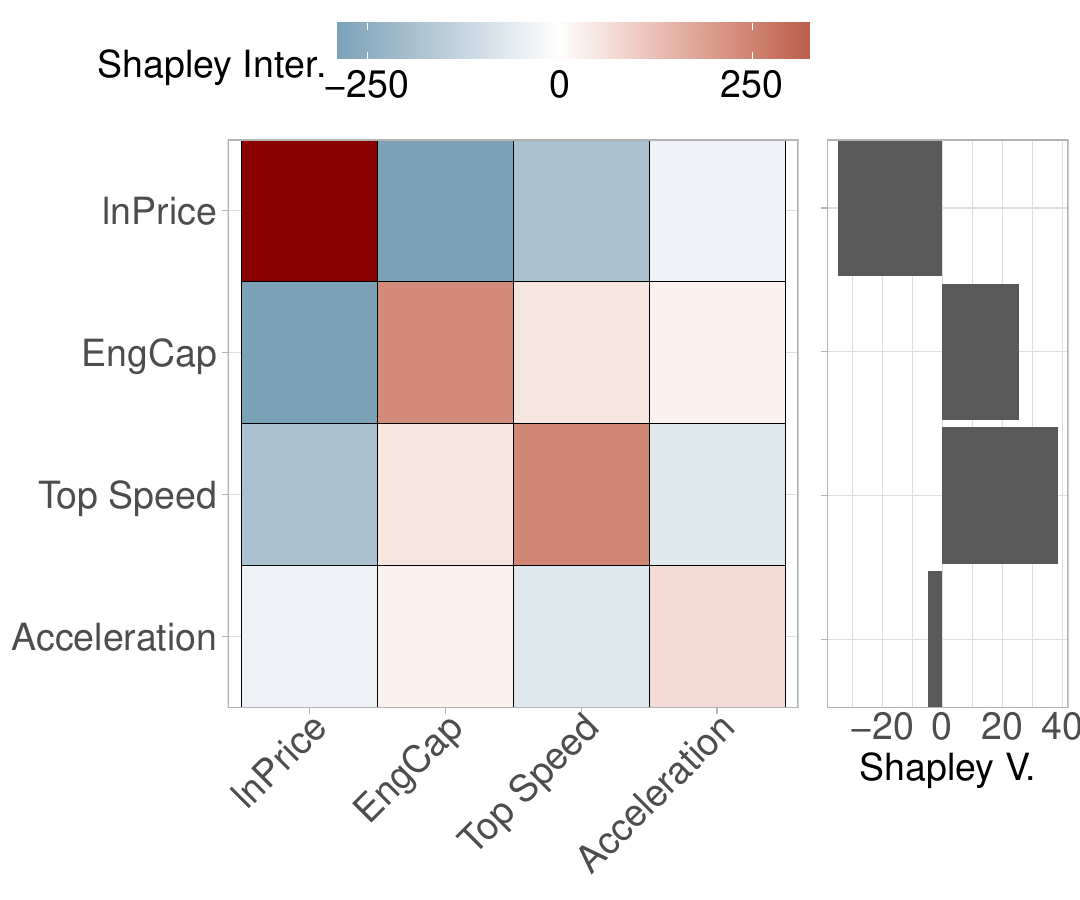}
        \caption{Tile plot of the Shapley interaction index for \textit{Ferrari}.}
        \label{fig:cars_shapley_inter}
    \end{subfigure}
    \caption{Barplot (a) of the decomposition of the Shapley values by centers and ranges for the six observations with the largest distances in the Cars dataset. Tile plot (b) of the Shapley interaction index for \textit{Ferrari}, with the corresponding Shapley values on the right.}
    \label{fig:cars2}
\end{figure}

Lastly, we computed the Shapley interaction index, which allows us to further explore the joint contribution of pairs of variables to the outlyingness of an observation. For instance, \autoref{fig:cars_shapley_inter} shows a tile plot of the Shapley interaction index for the \textit{Ferrari} observation. We see that the largest positive values are in the diagonal, which represent the individual contributions of each variable, while the off-diagonal elements represent the joint contributions of pairs of variables. In particular, the individual contribution of \textit{lnPrice} is the largest positive contribution, but the negative joint contributions with the other variables are also significant, and combined they actually have a stabilizing effect on the outlyingness of \textit{Ferrari}, indicating that the atypical price of \textit{Ferrari} is somewhat mitigated by its engine capacity, top speed, and acceleration. In contrast, \textit{EngCap} has positive joint contributions with \textit{Top Speed} and \textit{Acceleration}, which further increase the outlyingness of \textit{Ferrari}. This example illustrates how the Shapley interaction index can provide insights into the outlyingness of an observation, which may not be apparent when looking at marginal contributions alone.

\subsection{Spotify Dataset}
The Spotify Tracks dataset \citep{kaggle.spotify2022}, collected in October 2022, comprises $20$ audio and metadata features describing $114\,000$ music tracks. In this case study, tracks are aggregated by genre to detect outlying genres, and Shapley values are used to assess the contribution of each audio feature to their outlyingness. 

For this dataset, duplicates were removed, and the analysis was restricted to the $11$ numerical variables: \textit{duration\_ms}, \textit{popularity}, \textit{danceability}, \textit{energy}, \textit{loudness}, \textit{speechiness}, \textit{acousticness}, \textit{instrumentalness}, \textit{liveness}, \textit{valence}, and \textit{tempo}. To handle tracks associated with multiple genres, robust centroids were estimated for each genre based on the single-genre tracks, and, then, multi-genre tracks were assigned to the closest genre according to the robust Mahalanobis distance. This process yielded a dataset of $81\,033$ tracks. Several transformations were subsequently applied, including logarithmic scaling for \textit{tempo} and \textit{loudness}, normalization of \textit{popularity} to $[0,1]$, and conversion of \textit{duration\_ms} into minutes (\textit{duration}). Then, the data aggregation was carried out by setting the $1\%$ and $99\%$ quantiles as the lower and upper bounds of the intervals, respectively. The final dataset consists of $111$ genres and $11$ audio features, with each interval-valued observation representing a median of $800$ tracks. See \cite{loureiro2026} for further details.

Since, in this case, the microdata are available, we estimated the parameters of the latent variables' distributions, using Kernel Density Estimation (KDE) and the R-package \texttt{kde1d} \citep{kde1d}. The squared robust Interval-Mahalanobis distances of the observations to the barycenter were computed using the reweighted IMCD estimator with a subset size of approximately $75\%$ of the total number of observations and a farness cutoff of $0.95$. The outliers were then flagged using the farness concept with a cutoff of $0.95$ for extreme outliers and $0.9$ for mild outliers. In particular, \autoref{fig:spotify_dist} shows the squared robust distances and the cutoff values, with the extreme and mild outliers highlighted in blue and green, respectively. We see that \textit{sleep}, \textit{comedy}, \textit{classical}, and \textit{grindcore} are identified as extreme outliers, and \textit{chicago-house}, \textit{ambient}, \textit{iranian}, and \textit{death-metal} as mild outliers. Basic musical knowledge can help us understand why these genres are being flagged as outliers, and the Shapley values can further clarify which features are contributing to their outlyingness.

\begin{figure}[ht]
    \centering
    \includegraphics[width=\textwidth]{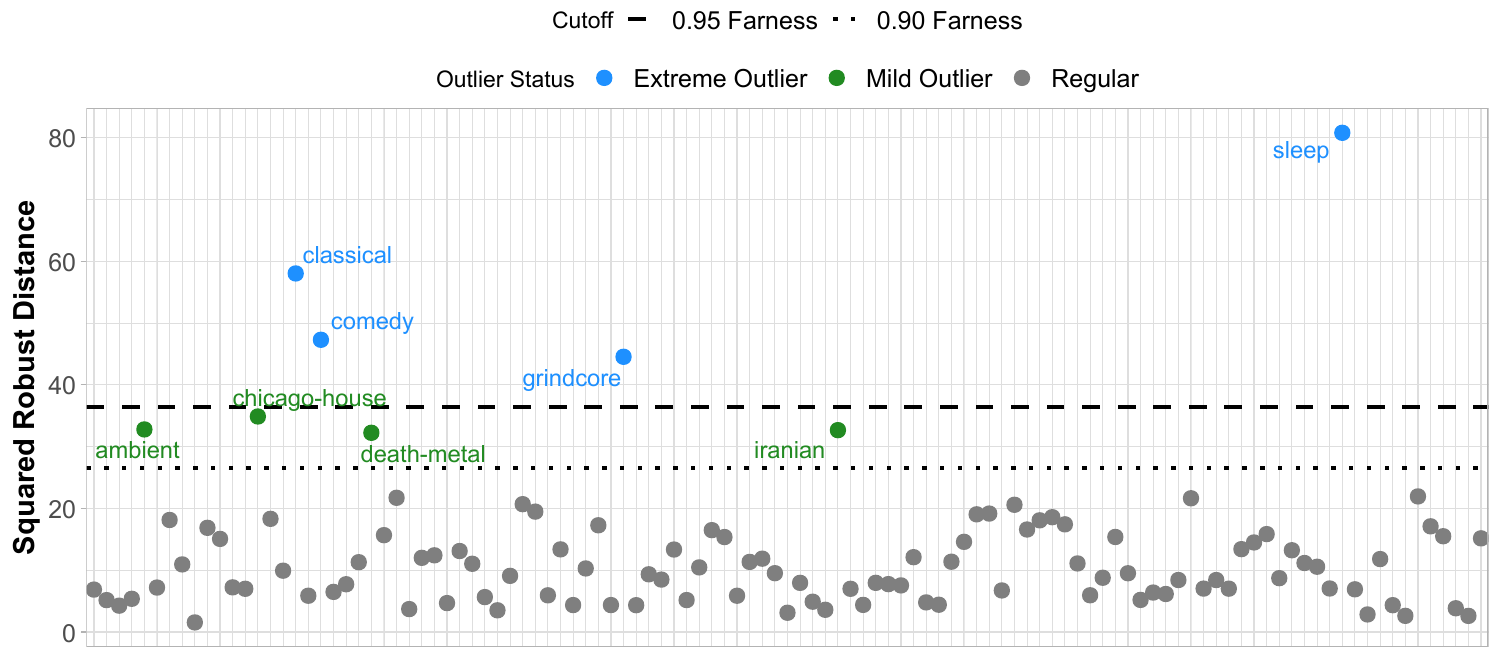}
    \caption{Plot of the squared robust Interval-Mahalanobis distances for the Spotify dataset. The horizontal lines correspond to the $0.9$ and $0.95$ farness cutoff values, and the extreme and mild outliers are marked in blue and green, respectively.}
    \label{fig:spotify_dist}
\end{figure}
 
\autoref{fig:spotify_shapley_barplot} shows the barplots of the Shapley value decomposition computed for the $12$ observations with the largest Interval-Mahalanobis distances. As indicated by the cutoff values represented by the dotted and dashed lines, the first four observations correspond to the extreme outliers and the next four to the mild outliers. We see that for \textit{sleep} and \textit{ambient} the features that most contribute to their outlyingness are \textit{loudness} and \textit{valence}, while \textit{energy} plays a stabilizing effect. This is consistent with the fact that these musical genres are usually less energetic and more calming. In the case of \textit{classical}, the main contributing features are \textit{duration}, \textit{loudness}, and \textit{acousticness}, which is expected since classical music tracks tend to be longer, less loud, and more acoustic compared to other genres. For \textit{comedy}, the main contributing features are \textit{tempo} and \textit{speechiness}, which is agreeable with comedy tracks having more spoken words and a different rhythm compared to regular music tracks. For \textit{grindcore}, the features with the highest contributions are \textit{danceability}, \textit{energy}, and \textit{acousticness}, which is consistent with the genre's characteristics of being fast-paced and intense. Finally, for \textit{chicago-house}, \textit{iranian}, and \textit{death-metal}, the contributions are more evenly distributed among the features, indicating that their outlyingness is due to a combination of factors rather than a single dominant feature.

\begin{figure}[ht]
    \centering
    \begin{subfigure}[t]{0.49\textwidth}
        \centering
        \includegraphics[width=\textwidth]{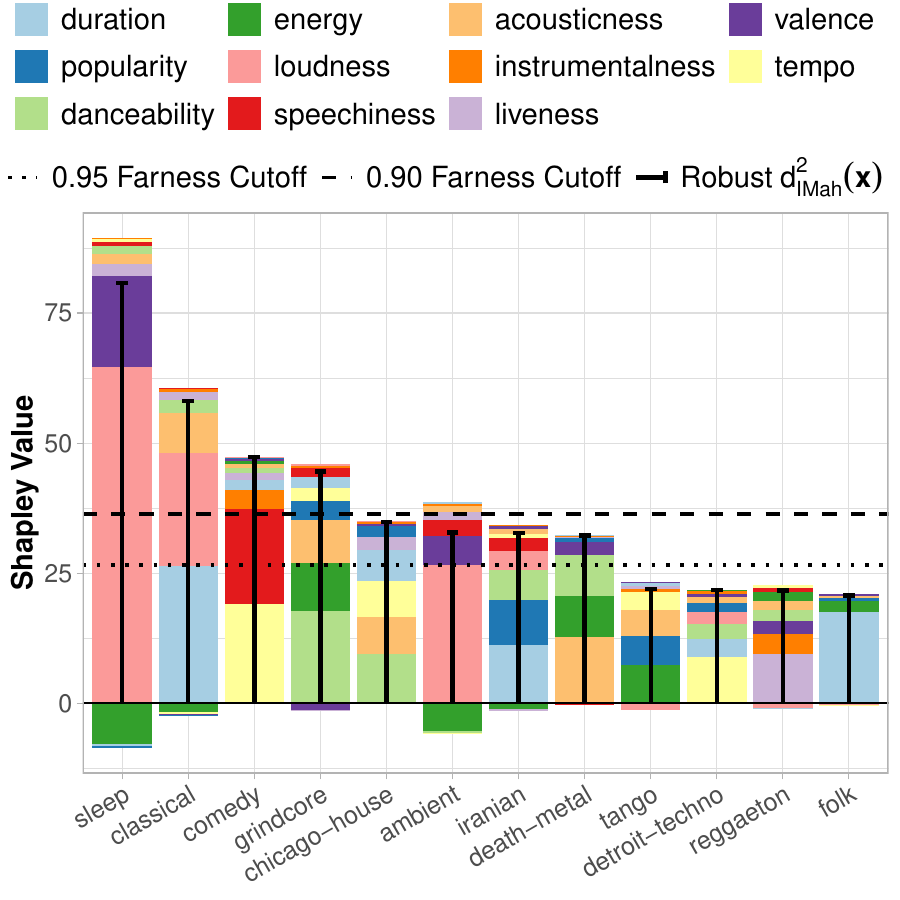}
        \caption{Barplot of the Shapley values.}
        \label{fig:spotify_shapley_barplot}
    \end{subfigure}
    \hfill
    \begin{subfigure}[t]{0.49\textwidth}
        \centering
        \includegraphics[width=\textwidth]{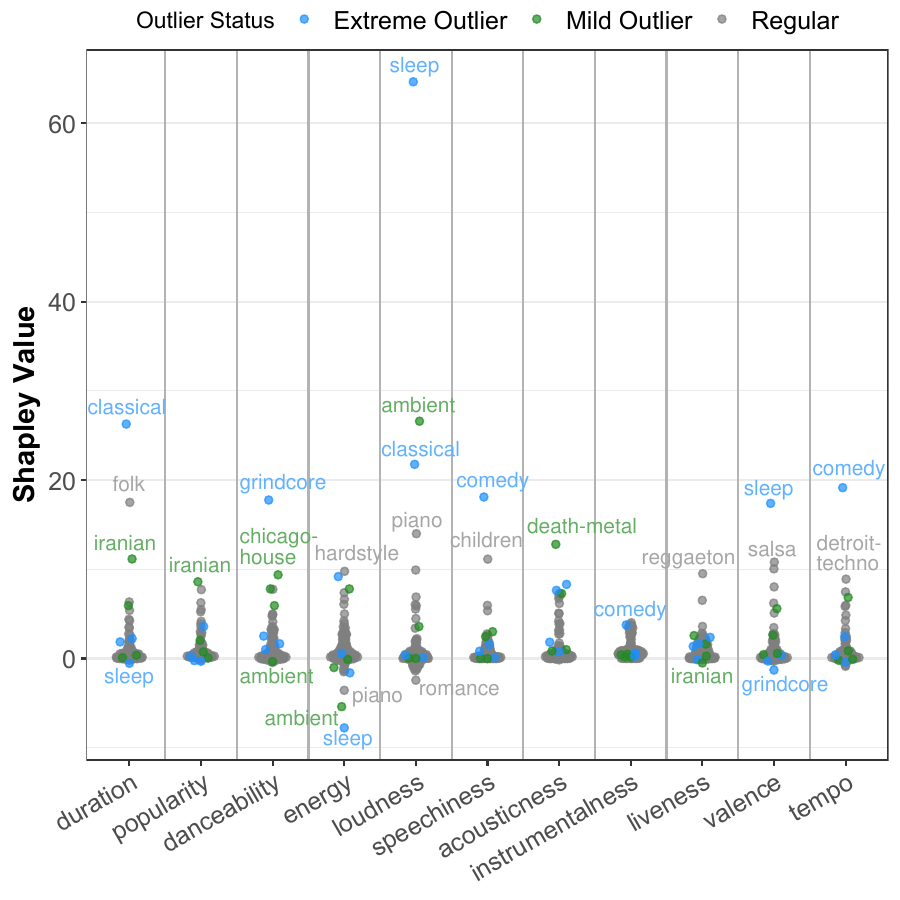}
        \caption{Beeswarm plot of the Shapley values.}
        \label{fig:spotify_shapley_beeswarm}
    \end{subfigure}
    \caption{Barplot (a) and beeswarm plot (b) of the Shapley value decomposition of the squared robust Interval-Mahalanobis distances for the Spotify dataset. In the barplot (a), only the $12$ observations with the largest distances are shown, ordered from greatest to smallest distance represented by the black vertical line, with the dotted and dashed lines representing the $0.9$ and $0.95$ farness cutoff values, respectively. In the beeswarm plot (b), the extreme outliers are marked in blue and the mild ones in green.}
    \label{fig:spotify_shapley}
\end{figure}

A beeswarm plot of the Shapley values for all observations per variable is plotted in \autoref{fig:spotify_shapley_beeswarm}, which allows us to assess how the outliers compare to regular observations and have a bigger picture regarding each variable's contribution. For instance, \textit{instrumentalness} stands out as the feature that, in general, contributes less to the Interval-Mahalanobis distance, while \textit{energy} is the feature that more often plays a stabilizing effect. On the contrary, \textit{loudness} seems to be one of the features that contributes the most to the final distance and whose observations with the highest Shapley values are outliers in the multivariate sense. It is also noteworthy that, for \textit{energy} and \textit{liveness}, the observations \textit{hardstyle} and \textit{reggaeton}, respectively, have the highest Shapley values. Therefore, these observations are candidates to cellwise outliers for these features, even though they are not multivariate outliers. Another example of a potential cellwise outlier is \textit{folk}, with the second-highest Shapley value for \textit{duration}, despite not being an outlier overall. This highlights the importance of looking at the Shapley values from different perspectives, as they can provide insights into the data that may not be apparent when looking at the overall distance alone.

In this case, the decomposition of the Shapley values into contributions of centers, ranges, and cross centers-ranges, actually consists of all three terms. This decomposition is shown in \autoref{fig:spotify_shapley_decomp} for the eight outlying observations. We see that, for all observations, the contribution of the centers is always the greatest, as expected, when taking into account the expression of the distance and the Shapley values. The contribution of the ranges is generally smaller, but, unlike in the Cars dataset, the difference is not as steep. For instance, the ranges' contribution for \textit{sleep}, \textit{classical}, and \textit{comedy} are considerable and increase the outlyingness of these genres. This suggests that the variability in the features also plays an important role in the outlyingness of these genres. In addition, the cross centers-ranges contributions are mostly negative for all eight observations, playing a stabilizing effect on the outlyingness of these genres. This indicates that there are interactions between the centers and ranges that mitigate the overall distance.

\begin{figure}[ht]
    \centering
    \includegraphics[width=\textwidth]{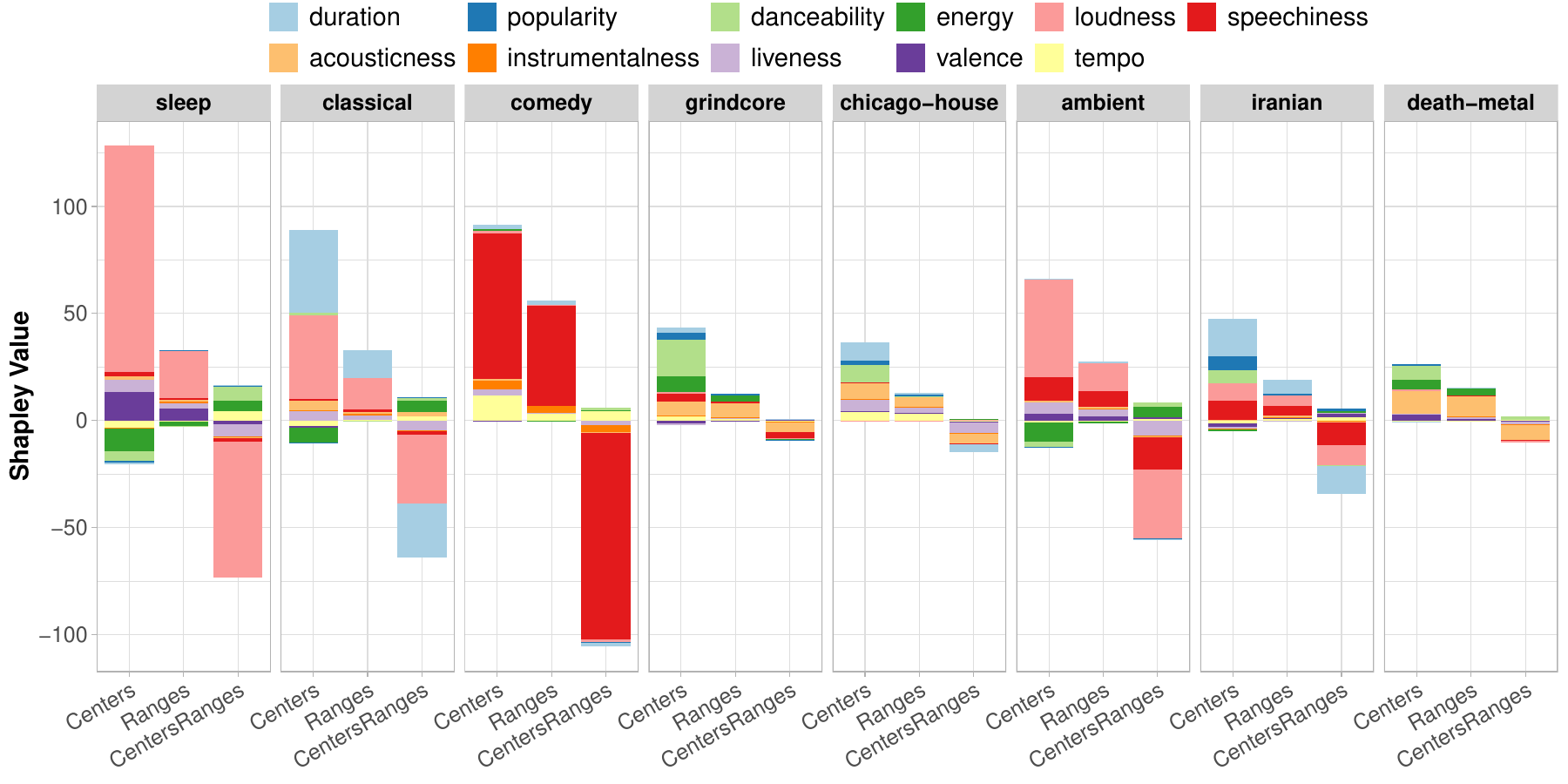}
    \caption{Barplot of the decomposition of the Shapley values by centers, ranges, and cross centers-ranges for the Spotify dataset. The four extreme outliers and the four mild outliers are shown, ordered from greatest to smallest Interval-Mahalanobis distance.}
    \label{fig:spotify_shapley_decomp}
\end{figure}

Overall, the Shapley value decomposition provides a detailed breakdown of the factors contributing to the outlyingness of each genre, allowing for a more nuanced understanding of the data and the characteristics that make certain genres stand out as outliers. In this particular case, the genres identified as outliers, such as \textit{sleep} and \textit{comedy}, are known to have distinct musical characteristics, with the Shapley values providing insights that align with musical knowledge about these genres.

\section{Conclusion}
\label{sec:conclusion}
In this work, we propose a novel framework for outlier explainability in interval-valued data, based on the Shapley value, a concept from cooperative game theory. We derive a closed-form expression for the Shapley values of the squared robust Interval-Mahalanobis distance, which is a measure of outlyingness for interval-valued data. The Shapley value is an additive decomposition of this distance into variable contributions, and of low computational cost. This is particularly useful for understanding which features are driving the outlyingness and, consequently, why an observation is flagged as an outlier.

The Shapley value decomposition can be further broken down into contributions of the centers, ranges, and cross centers-ranges, providing a more detailed understanding of the factors contributing to the outlyingness of an observation. Another interesting aspect of the Shapley value is its connection to the notion of cellwise outliers, helping to identify candidates to cellwise outliers and understand their input to the overall distance. We also derive the Shapley interaction index, which is a generalization of the Shapley value that allows attributing the outlyingness of an observation to pairs of variables. This can provide insights into interactions between variables that may not be apparent when looking at individual contributions alone.

All these potential uses of the Shapley value are illustrated through two real-world datasets from the automotive and music domains. In both cases, the method is able to provide clear, variable level explanations for the flagged observations, consistent with the relevant domain knowledge. The Shapley value decomposition allows us to understand the part each variable plays in the outlyingness of an observation, while the Shapley interaction index provides insights into interactions between variables.

No prior work, to our knowledge, has addressed explainability for interval-valued data, making this the first venture into this topic. We expect it can be a starting point for the development of more complex explainability methods, even outside the context of outlier detection, for interval-valued data or other symbolic data types.

\section*{Acknowledgments} 
\noindent Generative AI tools (GPT-5) were used for language editing, with the authors reviewing and editing the content as needed, and taking full responsibility for the content of the published article.
This work was supported by FCT — Funda\c{c}\~ao para a Ci\^encia e Tecnologia, I.P. through the grant \href{https://doi.org/10.54499/UI/BD/153720/2021}{UI/\discretionary{}{}{}BD/\discretionary{}{}{}153720/\discretionary{}{}{}2021}, and through the projects \href{https://doi.org/10.54499/UID/PRR/04621/2025}{UID/\discretionary{}{}{}04621/\discretionary{}{}{}2025}, \href{https://doi.org/10.54499/UID/50014/2025}{UID/\discretionary{}{}{}50014/\discretionary{}{}{}2025}, and UID/\discretionary{}{}{}4459/\discretionary{}{}{}2025.

\section*{Software Availability}\label{data-availability}
\noindent The proposed methodology and applications are implemented in the R package \texttt{AIDA}, available at \url{https://github.com/catarinaploureiro/AIDA}. The datasets used in this study are publicly available from CRAN and Kaggle as cited in the manuscript, and included in the R package \texttt{AIDA}. Specifically, the folder \texttt{raw-data} contains the pre-processing of the datasets, while the vignette \texttt{Shapley\_examples} replicates their analysis.

\small{
\bibliographystyle{elsarticle-harv}
\bibliography{cas-refs}}

\normalsize
\appendix
\section{Proofs from Section \ref{sec:explainability}}
\label{sec:appendix_explainability}

\subsection{Proof of \autoref{thm:shapley}}
\label{sec:proof_shapley}
Observing that $\mathcal{E}_{j\ell}=\mathcal{E}_{\ell j}=[\boldsymbol{\mathfrak{E}}_{UU}]_{j\ell}$ (by symmetry), $[\boldsymbol{\Sigma}_B^{-1}\boldsymbol{\Psi}]_{j\ell}=\omega_{j\ell}\psi_{\ell\ell}$, and $[\boldsymbol{\Psi}\boldsymbol{\Sigma}_B^{-1}]_{j\ell}=\psi_{jj}\omega_{j\ell}$, with $\psi_{j\ell}=\psi_{\ell j}=[\boldsymbol{\Psi}]_{j\ell}$ (by symmetry) and $\omega_{\ell j}=\omega_{j\ell}=[\boldsymbol{\Sigma}_B^{-1}]_{j\ell}$ (by symmetry), we can rewrite $d_\mathrm{IMah}^2(\boldsymbol{x})$ defined in \eqref{eq:d2_mah} as
\begin{equation}
\begin{aligned}
    d_\mathrm{IMah}^2(\boldsymbol{x})=&\sum_{j=1}^p\sum_{\ell=1}^p\biggl[(c_j-\mu_{C_j})(c_\ell-\mu_{C_\ell})\omega_{j\ell}+\frac{1}{4}(r_j-\mu_{C_j})(r_\ell-\mu_{R_\ell})\mathcal{E}_{j\ell}\omega_{j\ell} \\
    &\qquad\qquad+ (c_j-\mu_{C_j})(r_\ell-\mu_{R_\ell})\omega_{j\ell}\psi_{\ell\ell}\biggr].
\end{aligned}
\end{equation}
 
Then, for any subset of variables $S\subseteq P\setminus\{j\}$, we can write 
\begin{align}
    \label{eq:marginal_contribution}
    \Delta_jd_\mathrm{IMah}^2(\boldsymbol{x}^S)&=d_\mathrm{IMah}^2(\boldsymbol{x}^{S\cup\{j\}})-d_\mathrm{IMah}^2(\boldsymbol{x}^S)\\
    &=\sum_{\ell\in S\cup\{j\}}\sum_{h\in S\cup\{j\}}\biggl[(c_\ell-\mu_{C_\ell})(c_h-\mu_{C_h})\omega_{\ell h}+\frac{1}{4}(r_\ell-\mu_{R_\ell})(r_h-\mu_{R_h})\mathcal{E}_{\ell h}\omega_{\ell h} \nonumber\\
    &\qquad\qquad\qquad\qquad+ (c_\ell-\mu_{C_\ell})(r_h-\mu_{R_h})\omega_{\ell h}\psi_{hh}\biggr]\nonumber\\
    &\quad-\sum_{\ell\in S}\sum_{h\in S}\biggl[(c_\ell-\mu_{C_\ell})(c_h-\mu_{C_h})\omega_{\ell h}+\frac{1}{4}(r_\ell-\mu_{R_\ell})(r_h-\mu_{R_h})\mathcal{E}_{\ell h}\omega_{\ell h} \nonumber\\
    &\qquad\qquad\qquad+ (c_\ell-\mu_{C_\ell})(r_h-\mu_{R_h})\omega_{\ell h}\psi_{hh}\biggr]\nonumber\\
    &=\sum_{\ell\in S}\biggl[(c_\ell-\mu_{C_\ell})(c_j-\mu_{C_j})\omega_{\ell j}+\frac{1}{4}(r_\ell-\mu_{R_\ell})(r_j-\mu_{R_j})\mathcal{E}_{\ell j}\omega_{\ell j} \nonumber\\
    &\qquad\qquad+ (c_\ell-\mu_{C_\ell})(r_j-\mu_{R_j})\omega_{\ell j}\psi_{jj}\biggr]\nonumber\\
    &\quad+\sum_{h\in S}\biggl[(c_j-\mu_{C_j})(c_h-\mu_{C_h})\omega_{jh}+\frac{1}{4}(r_j-\mu_{R_j})(r_h-\mu_{R_h})\mathcal{E}_{jh}\omega_{jh} \nonumber\\
    &\qquad\qquad+ (c_j-\mu_{C_j})(r_h-\mu_{R_h})\omega_{jh}\psi_{hh}\biggr]\nonumber\\
    &\quad+(c_j-\mu_{C_j})^2\omega_{jj}+\frac{1}{4}(r_j-\mu_{R_j})^2\mathcal{E}_{jj}\omega_{jj}+(c_j-\mu_{C_j})(r_j-\mu_{R_j})\omega_{jj}\psi_{jj}\nonumber\\
    \label{eq:marginal_contribution2}
    &=2(c_j-\mu_{C_j})\sum_{\ell\in S}(c_\ell-\mu_{C_\ell})\omega_{j\ell}+\frac{1}{2}(r_j-\mu_{R_j})\sum_{\ell\in S}(r_\ell-\mu_{R_\ell})\mathcal{E}_{j\ell}\omega_{j\ell}\nonumber\\
    &\quad+(r_j-\mu_{R_j})\psi_{jj}\sum_{\ell\in S}(c_\ell-\mu_{C_\ell})\omega_{j\ell}+(c_j-\mu_{C_j})\sum_{\ell\in S}(r_\ell-\mu_{R_\ell})\psi_{\ell\ell}\omega_{j\ell}\\
    &\quad+(c_j-\mu_{C_j})^2\omega_{jj}+\frac{1}{4}(r_j-\mu_{R_j})^2\mathcal{E}_{jj}\omega_{jj}+(c_j-\mu_{C_j})(r_j-\mu_{R_j})\omega_{jj}\psi_{jj}.\nonumber
\end{align}

Defining $w(|S|)=\dfrac{|S|!(p-|S|-1)!}{p!}$, where $|\cdot|$ denotes set cardinality, we can derive the Shapley value of the $j$-th variable as
\begin{align*}
    \phi_j(\boldsymbol{x})&=\sum_{S\subseteq P\setminus\{j\}}w(|S|)\Delta_jd_\mathrm{IMah}^2(\boldsymbol{x}^S)\\
    &=\sum_{S\subseteq P\setminus\{j\}}w(|S|)\left[2(c_j-\mu_{C_j})\sum_{\ell\in S}(c_\ell-\mu_{C_\ell})\omega_{j\ell}+\frac{1}{2}(r_j-\mu_{R_j})\sum_{\ell\in S}(r_\ell-\mu_{R_\ell})\mathcal{E}_{j\ell}\omega_{j\ell}\right.\\
    &\qquad\qquad\qquad\qquad+(r_j-\mu_{R_j})\psi_{jj}\sum_{\ell\in S}(c_\ell-\mu_{C_\ell})\omega_{j\ell}+(c_j-\mu_{C_j})\sum_{\ell\in S}(r_\ell-\mu_{R_\ell})\psi_{\ell\ell}\omega_{j\ell}\\
    &\qquad\qquad\qquad\qquad+\left.(c_j-\mu_{C_j})^2\omega_{jj}+\frac{1}{4}(r_j-\mu_{R_j})^2\mathcal{E}_{jj}\omega_{jj}+(c_j-\mu_{C_j})(r_j-\mu_{R_j})\omega_{jj}\psi_{jj}\right]\\
    &=\sum_{S\subseteq P\setminus\{j\}}w(|S|)\left[2(c_j-\mu_{C_j})\sum_{\ell\in S}(c_\ell-\mu_{C_\ell})\omega_{j\ell}+\frac{1}{2}(r_j-\mu_{R_j})\sum_{\ell\in S}(r_\ell-\mu_{R_\ell})\mathcal{E}_{j\ell}\omega_{j\ell}\right.\\
    &\qquad\qquad\qquad\qquad+\left.(r_j-\mu_{R_j})\psi_{jj}\sum_{\ell\in S}(c_\ell-\mu_{C_\ell})\omega_{j\ell}+(c_j-\mu_{C_j})\sum_{\ell\in S}(r_\ell-\mu_{R_\ell})\psi_{\ell\ell}\omega_{j\ell}\right]\\
    &\quad+\sum_{S\subseteq P\setminus\{j\}}w(|S|)\left[(c_j-\mu_{C_j})^2\omega_{jj}+\frac{1}{4}(r_j-\mu_{R_j})^2\mathcal{E}_{jj}\omega_{jj}+(c_j-\mu_{C_j})(r_j-\mu_{R_j})\omega_{jj}\psi_{jj}\right].
\end{align*}
Since there are $\dbinom{p-1}{s}$ subsets of $P\setminus\{j\}$ of size $s$, we have
\begin{equation}
    \sum\limits_{S\subseteq P\setminus\{j\}}w(|S|)=\sum\limits_{s=0}^{p-1}\dbinom{p-1}{s}\dfrac{s!(p-s-1)!}{p!}=1,
\end{equation}
where $s=|S|$ and $\dbinom{n}{k}=\dfrac{n!}{k!(n-k)!}$ is the binomial coefficient, we can further simplify the expression as
\begin{align*}
    \phi_j(\boldsymbol{x})&=\sum_{s=1}^{p-1}\Biggl(w(s)\sum_{\substack{S\subseteq P\setminus\{j\}\\|S|=s}}\left[2(c_j-\mu_{C_j})\sum_{\ell\in S}(c_\ell-\mu_{C_\ell})\omega_{j\ell}+\frac{1}{2}(r_j-\mu_{R_j})\sum_{\ell\in S}(r_\ell-\mu_{R_\ell})\mathcal{E}_{j\ell}\omega_{j\ell}\right.\\
    &\qquad\qquad\qquad\qquad+\left.(r_j-\mu_{R_j})\psi_{jj}\sum_{\ell\in S}(c_\ell-\mu_{C_\ell})\omega_{j\ell}+(c_j-\mu_{C_j})\sum_{\ell\in S}(r_\ell-\mu_{R_\ell})\psi_{\ell\ell}\omega_{j\ell}\right]\Biggr)\\
    &\quad+(c_j-\mu_{C_j})^2\omega_{jj}+\frac{1}{4}(r_j-\mu_{R_j})^2\mathcal{E}_{jj}\omega_{jj}+(c_j-\mu_{C_j})(r_j-\mu_{R_j})\omega_{jj}\psi_{jj}.
\end{align*}
Since for each fixed $\ell$, the number of subsets $S\subseteq P\setminus\{j\}$ with $|S|=s$ and $\ell\in S$ is $\dbinom{p-2}{s-1}$, we have
\begin{align*}
    \phi_j(\boldsymbol{x})&=\sum_{s=1}^{p-1}\Biggl(w(s)\binom{p-2}{s-1}\left[2(c_j-\mu_{C_j})\sum_{\ell\in S}(c_\ell-\mu_{C_\ell})\omega_{j\ell}+\frac{1}{2}(r_j-\mu_{R_j})\sum_{\ell\in S}(r_\ell-\mu_{R_\ell})\mathcal{E}_{j\ell}\omega_{j\ell}\right.\\
    &\qquad\qquad\qquad\qquad+\left.(r_j-\mu_{R_j})\psi_{jj}\sum_{\ell\in S}(c_\ell-\mu_{C_\ell})\omega_{j\ell}+(c_j-\mu_{C_j})\sum_{\ell\in S}(r_\ell-\mu_{R_\ell})\psi_{\ell\ell}\omega_{j\ell}\right]\Biggr)\\
    &\quad+(c_j-\mu_{C_j})^2\omega_{jj}+\frac{1}{4}(r_j-\mu_{R_j})^2\mathcal{E}_{jj}\omega_{jj}+(c_j-\mu_{C_j})(r_j-\mu_{R_j})\omega_{jj}\psi_{jj},
\end{align*}
and noticing that $w(s)\dbinom{p-2}{s-1}=\dfrac{s}{p(p-1)}$, it becomes
\begin{align*}
    \phi_j(\boldsymbol{x})&=\sum_{s=1}^{p-1}\Biggl(\frac{s}{p(p-1)}\left[2(c_j-\mu_{C_j})\sum_{\ell\in S}(c_\ell-\mu_{C_\ell})\omega_{j\ell}+\frac{1}{2}(r_j-\mu_{R_j})\sum_{\ell\in S}(r_\ell-\mu_{R_\ell})\mathcal{E}_{j\ell}\omega_{j\ell}\right.\\
    &\qquad\qquad\qquad\qquad+\left.(r_j-\mu_{R_j})\psi_{jj}\sum_{\ell\in S}(c_\ell-\mu_{C_\ell})\omega_{j\ell}+(c_j-\mu_{C_j})\sum_{\ell\in S}(r_\ell-\mu_{R_\ell})\psi_{\ell\ell}\omega_{j\ell}\right]\Biggr)\\
    &\quad+(c_j-\mu_{C_j})^2\omega_{jj}+\frac{1}{4}(r_j-\mu_{R_j})^2\mathcal{E}_{jj}\omega_{jj}+(c_j-\mu_{C_j})(r_j-\mu_{R_j})\omega_{jj}\psi_{jj}.
\end{align*}
Finally, since $\sum\limits_{s=1}^{p-1}\dfrac{s}{p(p-1)}=\dfrac{1}{2}$, we obtain
\begin{align*}
    \phi_j(\boldsymbol{x})&=(c_j-\mu_{C_j})\sum_{\ell\in P\setminus\{j\}}(c_\ell-\mu_{C_\ell})\omega_{j\ell}+\frac{1}{4}(r_j-\mu_{R_j})\sum_{\ell\in P\setminus\{j\}}(r_\ell-\mu_{R_\ell})\mathcal{E}_{j\ell}\omega_{j\ell}\\
    &\quad+\frac{1}{2}(r_j-\mu_{R_j})\psi_{jj}\sum_{\ell\in P\setminus\{j\}}(c_\ell-\mu_{C_\ell})\omega_{j\ell}+\frac{1}{2}(c_j-\mu_{C_j})\sum_{\ell\in P\setminus\{j\}}(r_\ell-\mu_{R_\ell})\psi_{\ell\ell}\omega_{j\ell}\\
    &\quad+(c_j-\mu_{C_j})^2\omega_{jj}+\frac{1}{4}(r_j-\mu_{R_j})^2\mathcal{E}_{jj}\omega_{jj}+(c_j-\mu_{C_j})(r_j-\mu_{R_j})\omega_{jj}\psi_{jj}\\
    &=(c_j-\mu_{C_j})\sum_{\ell=1}^p(c_\ell-\mu_{C_\ell})\omega_{j\ell}+\frac{1}{4}(r_j-\mu_{R_j})\sum_{\ell=1}^p(r_\ell-\mu_{R_\ell})\mathcal{E}_{j\ell}\omega_{j\ell}\\
    &\quad+\frac{1}{2}(r_j-\mu_{R_j})\psi_{jj}\sum_{\ell=1}^p(c_\ell-\mu_{C_\ell})\omega_{j\ell}+\frac{1}{2}(c_j-\mu_{C_j})\sum_{\ell=1}^p(r_\ell-\mu_{R_\ell})\psi_{\ell\ell}\omega_{j\ell}.
\end{align*}
This concludes the proof.
\qed

\subsection{Proof of \autoref{thm:shapley_interaction}}
\label{sec:proof_shapley_interaction}
Considering the marginal contribution of the $j$-th variable to the squared distance obtained in the proof of \autoref{thm:shapley} given by \eqref{eq:marginal_contribution} and \eqref{eq:marginal_contribution2}, we can obtain the marginal contribution of the interaction between the $j$-th and $\ell$-th variables as
\begin{align*}
    \Delta_{\{j,\ell\}}d_\mathrm{IMah}^2(\boldsymbol{x}^T)&=\Delta_\ell d_\mathrm{IMah}^2(\boldsymbol{x}^{T\cup\{j\}})-\Delta_\ell d_\mathrm{IMah}^2(\boldsymbol{x}^T)\\
    &=\left[d_\mathrm{IMah}^2(\boldsymbol{x}^{T\cup\{j,\ell\}})-d_\mathrm{IMah}^2(\boldsymbol{x}^{T\cup\{j\}})\right]-\left[d_\mathrm{IMah}^2(\boldsymbol{x}^{T\cup\{\ell\}})-d_\mathrm{IMah}^2(\boldsymbol{x}^T)\right]\\
    &=\Bigg[2(c_\ell-\mu_{C_\ell})\sum_{h\in T\cup\{j\}}(c_h-\mu_{C_h})\omega_{h\ell}+\frac{1}{2}(r_\ell-\mu_{R_\ell})\sum_{h\in T\cup\{j\}}(r_h-\mu_{R_h})\mathcal{E}_{h\ell}\omega_{h\ell}\Bigg.\\
    &\qquad+(r_\ell-\mu_{R_\ell})\psi_{\ell\ell}\sum_{h\in T\cup\{j\}}(c_h-\mu_{C_h})\omega_{h\ell}+(c_\ell-\mu_{C_\ell})\sum_{h\in T\cup\{j\}}(r_h-\mu_{R_h})\psi_{hh}\omega_{h\ell}\\
    &\qquad+\Bigg.(c_\ell-\mu_{C_\ell})^2\omega_{\ell\ell}+\frac{1}{4}(r_\ell-\mu_{R_\ell})^2\mathcal{E}_{\ell\ell}\omega_{\ell\ell}+(c_\ell-\mu_{C_\ell})(r_\ell-\mu_{R_\ell})\omega_{\ell\ell}\psi_{\ell\ell}\Bigg]\\
    &\ \quad-\Bigg[2(c_\ell-\mu_{C_\ell})\sum_{h\in T}(c_h-\mu_{C_h})\omega_{h\ell}+\frac{1}{2}(r_\ell-\mu_{R_\ell})\sum_{h\in T}(r_h-\mu_{R_h})\mathcal{E}_{h\ell}\omega_{h\ell}\Bigg.\\
    &\quad\qquad+(r_\ell-\mu_{R_\ell})\psi_{\ell\ell}\sum_{h\in T}(c_h-\mu_{C_h})\omega_{h\ell}+(c_\ell-\mu_{C_\ell})\sum_{h\in T}(r_h-\mu_{R_h})\psi_{hh}\omega_{h\ell}\\
    &\quad\qquad+\Bigg.(c_\ell-\mu_{C_\ell})^2\omega_{\ell\ell}+\frac{1}{4}(r_\ell-\mu_{R_\ell})^2\mathcal{E}_{\ell\ell}\omega_{\ell\ell}+(c_\ell-\mu_{C_\ell})(r_\ell-\mu_{R_\ell})\omega_{\ell\ell}\psi_{\ell\ell}\Bigg]\\
    &=2(c_j-\mu_{C_j})(c_\ell-\mu_{C_\ell})\omega_{j\ell}+\frac{1}{2}(r_j-\mu_{R_j})(r_\ell-\mu_{R_\ell})\mathcal{E}_{j\ell}\omega_{j\ell}\\
    &\quad+(c_j-\mu_{C_j})(r_\ell-\mu_{R_\ell})\psi_{\ell\ell}\omega_{j\ell}+(r_j-\mu_{R_j})(c_\ell-\mu_{C_\ell})\psi_{jj}\omega_{j\ell}.\\
\end{align*}

The Shapley value of the interaction between the $j$-th and $\ell$-th variables, for $j\neq\ell$, is then given by
\begin{align*}
    \Phi_{j\ell}(\boldsymbol{x})&=\sum_{T\subseteq P\setminus\{j,\ell\}}\frac{|T|!(p-|T|-2)!}{(p-1)!}\Delta_{\{j,\ell\}}d_\mathrm{IMah}^2(\boldsymbol{x}^T)\\
    &=\sum_{T\subseteq P\setminus\{j,\ell\}}\frac{|T|!(p-|T|-2)!}{(p-1)!}\Bigg[2(c_j-\mu_{C_j})(c_\ell-\mu_{C_\ell})\omega_{j\ell}+\frac{1}{2}(r_j-\mu_{R_j})(r_\ell-\mu_{R_\ell})\mathcal{E}_{j\ell}\omega_{j\ell}\Bigg.\\
    &\qquad\qquad\qquad\qquad\qquad\qquad+\Bigg.(c_j-\mu_{C_j})(r_\ell-\mu_{R_\ell})\psi_{\ell\ell}\omega_{j\ell}+(r_j-\mu_{R_j})(c_\ell-\mu_{C_\ell})\psi_{jj}\omega_{j\ell}\Bigg],
\end{align*}
and since there are $\dbinom{p-2}{t}$ subsets of size $t$ in $P\setminus{j,\ell}$, we have
\begin{equation*}
    \sum_{T\subseteq P\setminus\{j,\ell\}}\dfrac{|T|!(p-|T|-2)!}{(p-1)!}=\sum\limits_{t=0}^{p-2}\binom{p-2}{t}\frac{t!(p-t-2)!}{(p-1)!}=1,
\end{equation*}
where $t=|T|$, which leads to
\begin{align*}
    \Phi_{j\ell}(\boldsymbol{x})&=2(c_j-\mu_{C_j})(c_\ell-\mu_{C_\ell})\omega_{j\ell}+\frac{1}{2}(r_j-\mu_{R_j})(r_\ell-\mu_{R_\ell})\mathcal{E}_{j\ell}\omega_{j\ell}\\
    &\quad+\Bigg.(c_j-\mu_{C_j})(r_\ell-\mu_{R_\ell})\psi_{\ell\ell}\omega_{j\ell}+(r_j-\mu_{R_j})(c_\ell-\mu_{C_\ell})\psi_{jj}\omega_{j\ell}.
\end{align*}

As for the diagonal elements of the interaction matrix, combining the similar terms in $\phi_j(\boldsymbol{x})$ and $\sum_{\ell\neq j}\Phi_{j\ell}(\boldsymbol{x})$, we can write that
\begin{align*}
    \Phi_{jj}(\boldsymbol{x})&=\phi_j(\boldsymbol{x})-\sum_{\ell\neq j}\Phi_{j\ell}(\boldsymbol{x})\\
    &=(c_j-\mu_{C_j})\sum_{\ell=1}^p(c_\ell-\mu_{C_\ell})\omega_{j\ell}-2(c_j-\mu_{C_j})\sum_{\ell\neq j}(c_\ell-\mu_{C_\ell})\omega_{j\ell}\\
    &\quad+\frac{1}{4}(r_j-\mu_{R_j})\sum_{\ell=1}^p(r_\ell-\mu_{R_\ell})\mathcal{E}_{j\ell}\omega_{j\ell}-\frac{1}{2}(r_j-\mu_{R_j})\sum_{\ell\neq j}(r_\ell-\mu_{R_\ell})\mathcal{E}_{j\ell}\omega_{j\ell}\\
    &\quad+\frac{1}{2}(c_j-\mu_{C_j})\sum_{\ell=1}^p(r_\ell-\mu_{R_\ell})\psi_{\ell\ell}\omega_{j\ell}-(c_j-\mu_{C_j})\sum_{\ell\neq j}(r_\ell-\mu_{R_\ell})\psi_{\ell\ell}\omega_{j\ell}\\
    &\quad+\frac{1}{2}(r_j-\mu_{R_j})\psi_{jj}\sum_{\ell=1}^p(c_\ell-\mu_{C_\ell})\omega_{j\ell}-(r_j-\mu_{R_j})\sum_{\ell\neq j}(c_\ell-\mu_{C_\ell})\psi_{jj}\omega_{j\ell}\\
    &=(c_j-\mu_{C_j})^2\omega_{jj}-(c_j-\mu_{C_j})\sum_{\ell\neq j}(c_\ell-\mu_{C_\ell})\omega_{j\ell}\\
    &\quad+\frac{1}{4}(r_j-\mu_{R_j})^2\mathcal{E}_{jj}\omega_{jj}-\frac{1}{4}(r_j-\mu_{R_j})\sum_{\ell\neq j}(r_\ell-\mu_{R_\ell})\mathcal{E}_{j\ell}\omega_{j\ell}\\
    &\quad+\frac{1}{2}(c_j-\mu_{C_j})(r_j-\mu_{R_j})\psi_{jj}\omega_{jj}-\frac{1}{2}(c_j-\mu_{C_j})\sum_{\ell\neq j}(r_\ell-\mu_{R_\ell})\psi_{\ell\ell}\omega_{j\ell}\\
    &\quad+\frac{1}{2}(r_j-\mu_{R_j})(c_j-\mu_{C_j})\psi_{jj}\omega_{jj}-\frac{1}{2}(r_j-\mu_{R_j})\psi_{jj}\sum_{\ell\neq j}(c_\ell-\mu_{C_\ell})\omega_{j\ell}.
\end{align*}

This concludes the proof of \autoref{thm:shapley_interaction}.
\qed

\end{document}